\font\msbm=msbm10
\theoremstyle{plain}
\newtheorem{theorem}{theorem}
\newtheorem{lemma}[theorem]{Lemma}
\newtheorem{proposition}[theorem]{Proposition}
\newtheorem{remark}[theorem]{Remark}
\def\mathbb#1{\hbox{\msbm{#1}}}
\newcommand{\Tr}{ {\rm Tr}}
\newcommand{\N}{{\mathbb{N}}}
\newcommand{\R}{{\mathbb{R}}}
\newcommand{\Z}{{\mathbb{Z}}}
\newcommand{\C}{{\mathbb{C}}}
\renewcommand{\P}{{\mathbb{P}}}
\newcommand{\E}{{\mathbb{E}}}
\newcommand{\sgn}{\operatorname{sgn}}
\newcommand{\sparsity}{s}
\newcommand{\n}{n}				
\newcommand{\m}{m}				
\newcommand{\beq}{\begin{eqnarray}}
\newcommand{\eeq}{\end{eqnarray}}
\newcommand{\beqn}{\begin{eqnarray*}}
\newcommand{\eeqn}{\end{eqnarray*}}
\newcommand{\supp}{\operatorname{supp}}
\newcommand{\nns}[1]{{|\!|\!|}{#1}{|\!|\!|}_s}
\renewcommand{\Re}{\operatorname{Re}}
\renewcommand{\Im}{\operatorname{Im}}
\newcommand{\transk}{k}
\newcommand{\vct}[1]{\bm{#1}}
\newcommand{\mtx}[1]{\bm{#1}}
\newcommand{\qed}{\rule{2.5mm}{2.5mm}}
\newenvironment{proof}{\noindent
{\bf\underline{Proof:} }}
{\hspace*{\fill}\qed\vskip1em}
 \newcommand{\mymatrixNewNewSPECIALB}[1]{
            \left(\footnotesize
                \setlength{\arraycolsep}{4pt}
                \renewcommand{\arraystretch}{.7}
                    \begin{array}{cccc|ccc|cc} #1
                    \end{array}
            \renewcommand{\arraystretch}{1}
    \right)    }
    \newcommand{\mymatrixNewNewSPECIAL}[1]{
            \left(\footnotesize
                \setlength{\arraycolsep}{4pt}
                \renewcommand{\arraystretch}{.7}
                    \begin{array}{ccccc|ccccc|cc} #1
                    \end{array}
            \renewcommand{\arraystretch}{1}
    \right)    }
\begin{document}

\title{The restricted isometry property for time-frequency structured random matrices}
\author{G\"otz E. Pfander\footnotemark[1], Holger Rauhut\footnotemark[2],  Joel A. Tropp\footnotemark[3]}


\maketitle

\begin{center}
Dedicated to Hans Georg Feichtinger on the occasion of his 60th birthday.
\end{center}


\begin{abstract} We establish the restricted isometry property for finite dimensional Gabor systems,
that is, for families of time--frequency shifts of a randomly chosen window function. We show that the $\sparsity$-th order
restricted isometry constant of the associated $n \times n^2$ Gabor synthesis matrix is small provided $s \leq c \, n^{2/3} / \log^2 n$.
This improves on previous estimates that exhibit quadratic scaling of $n$ in $\sparsity$. Our proof develops bounds for
a corresponding chaos process.
\end{abstract}
\vspace{0.5cm}
\begin{tabbing}
{\bf Key Words:} compressive sensing, restricted isometry property, Gabor system,\\ time-frequency analysis, random matrix, chaos process.
\end{tabbing}
{\bf AMS Subject classification:} 60B20, 42C40, 94A12

\renewcommand{\thefootnote}{\fnsymbol{footnote}}

\footnotetext[1]{GEP is with School of Engineering and Science, Jacobs
University Bremen, 28759 Bremen, Germany, (e-mail: \url{g.pfander@jacobs-university.de}).}

\footnotetext[2]{HR is with Hausdorff Center for Mathematics and Insitute for Numerical Simulation, University of Bonn,
Endenicher Allee 60, 53115 Bonn, Germany (e-mail: \url{rauhut@hcm.uni-bonn.de}). 
}

\footnotetext[3]{JAT is with California Institute of Technology, Pasadena, CA 91125
USA (e-mail: \url{jtropp@cms.caltech.edu}). 
}

\section{Introduction and statements of results}

Sparsity has become a key concept in applied mathematics and engineering.  
This is largely due to the empirical observation
that a large number of real-world signals can be represented 
well by a sparse expansion in an appropriately chosen system of basic signals.
Compressive sensing~\cite{ca06-1,carota06,cata06,do06-2,fora11,ra10}
predicts that a small number of linear samples suffices to capture all the information in a sparse vector and that, furthermore, we can recover the sparse vector from these samples using efficient algorithms.  This discovery has a number of potential applications in signal processing, as well as other areas of science and technology.

Linear data acquisition is described by a measurement matrix.  The \emph{restricted isometry property} (RIP)~\cite{carota06-1,cata06,fora11,ra10} is by-now a standard tool for studying how efficiently the measurement matrix captures information about sparse signals.  The RIP also streamlines the analysis of signal reconstruction algorithms, including $\ell_1$-minization, greedy and iterative algorithms.
Up to date there are no deterministic constructions of measurement matrices available that satisfy the RIP with the optimal scaling behavior; see, for example, the discussions in \cite[Sec.~2.5]{ra10} and \cite[Sec.~5.1]{fora11}.  In contrast, a variety of random measurement matrices exhibit the RIP with optimal scaling, including Gaussian matrices and Rademacher matrices~\cite{badadewa08,dota09,rascva08,cata06}.

Although Gaussian random matrices are optimal for sparse recovery \cite{do06-2,foparaul10}, they have limited use in practice because many applications impose structure on the matrix.  Furthermore, recovery algorithms are significantly more efficient when the matrix admits a fast matrix--vector multiplication.  For example, random sets of rows from a discrete Fourier transform matrix model the measurement process in MRI imaging and other applications. These random partial Fourier matrices lead to fast recovery algorithms because they can utilize the FFT.  It is known that a random partial Fourier matrix satisfies 
a near-optimal RIP~\cite{cata06,ruve08,ra08,ra10} with high probability;
see also \cite{ra10,rawa10} for some generalizations. 

This paper studies another type of structured random matrix that arises from time-frequency analysis, and has potential
applications for the channel identification problem \cite{pfrata08} in wireless communications and sonar \cite{mi87,st99},
as well as in radar \cite{hest09}.
The columns of the considered $n \times n^2$ matrix consist of 
all discrete time-frequency shifts of a random vector. Previous analysis of this matrix
has provided bounds for the coherence \cite{pfrata08}, as well as nonuniform sparse recovery guarantees using $\ell_1$-minimization
\cite{pfra10}. However, the so far best available bounds on the restricted isometry constants were derived from coherence bounds \cite{pfrata08} and, therefore,  exhibit
highly non-optimal quadratic scaling of $n$ in the sparsity $\sparsity$. This paper dramatically improves on these bounds.
Such an improvement is important because the nonuniform recovery guarantees in \cite{pfra10} 
apply only for $\ell_1$-minimization, they do not provide
stability of reconstruction, and they do not show the existence of a single time-frequency structured measurement matrix that is able
to recover all sufficiently sparse vectors. Also it is of theoretical interest whether Gabor systems, that is, the columns of
our measurement matrix, can possess the restricted isometry property.
Nevertheless, our results still fall short of the optimal scaling that one might hope for.

Our approach is similar to the recent restricted isometry analysis for partial random circulant matrices in \cite{rarotr10}. Indeed,
also here we bound a chaos process of order $2$, by means of a Dudley type inequality for such processes due to Talagrand
\cite{ta05-2}. This requires to estimate covering numbers of the set of unit norm $\sparsity$-sparse
vectors with respect to two different metrics induced by the process. In contrast to \cite{rarotr10}, the specific structure 
of our problem does not allow us to reduce to the Fourier case, and to apply covering number estimates shown in \cite{ruve08}.

This paper is organized as follows. In Section~\ref{subsection:compressedsensing} we recall central concepts in compressive sensing. Section~\ref{section:time-frequencystructured}  introduces the time-frequency structured measurement matrices that are considered in this paper, and we state our main result, Theorem~\ref{theorem:main}. Remarks on applications in wireless communications and radar, as well as the relation of this paper to previous work are given in Sections~\ref{section:relationpreviouswork} and \ref{section:applicationwireless}, respectively.
Sections~\ref{section:expectation}, \ref{sec:cover:proofs} and \ref{section:probability}
provide  the proof of Theorem~\ref{theorem:main}.

\subsection{Compressive Sensing}\label{subsection:compressedsensing}

In general, reconstructing $\vct{x} = (x_1, \dots, x_N)^T \in \C^N$ from
\begin{equation}\label{eqn:linearsystem}
  \vct{y} ~=~ \mtx{A} \vct{x}\in \C^n,
\end{equation}
where  $\mtx{A} \in \C^{n \times N}$ and $n \ll N$ (in this paper, we  have $N=n^2$) is impossible without substantial {\it a-priori} information on $\vct{x}$.  In compressive sensing the assumption that $\vct{x}$ is $\sparsity$-sparse,  that is, $\|\vct{x}\|_0 := \#\{\ell : x_\ell \neq 0\} \leq \sparsity$ for some $\sparsity \ll N$ is introduced to ensure uniqueness and efficient recoverability of $\vct{x}$.
More generally, under the assumption that $\vct{x}$ is well-approximated by a sparse vector, the question is posed whether an optimally sparse approximation to $\vct{x}$ can be found efficiently.

Reconstruction of a sparse vector $\vct{x}$ by means of the $\ell_0$-minimization problem,
\[
\min_{\vct{z}} \|\vct{z}\|_0 \quad \mbox{ subject to }\quad \vct{y} ~=~ \mtx{A} \vct{z},
\]
is {\sf NP}-hard \cite{na95} and therefore not tractable. Consequently, a number of alternatives to $\ell_0$-minimization, for example, greedy algorithms \cite{blda09,fo10-2,neve09-1,netr08,tr04},
have been proposed in the literature. The most popular approach utilizes $\ell_1$-minimization \cite{carota06,chdosa99,do06-2}, that is, the convex program
\begin{equation}\label{l1:min}
\min_{\vct{z}} \|\vct{z}\|_1 \quad \mbox{ subject to } \vct{y} ~=~ \mtx{A} \vct{z},
\end{equation}
is solved, where $\|\vct{z} \|_1=|z_1|+|z_2|+\ldots+|z_N|$ denotes the usual $\ell_1$ vector norm.

To guarantee recoverability of the sparse vector $\vct{x}$ in \eqref{eqn:linearsystem} by means of $\ell_1$-minimization and greedy algorithms, it suffices to establish the restricted isometry property (RIP) of the so-called measurement matrix $\mtx{A}$:
define the restricted isometry constant $\delta_{\sparsity}$ of an $n \times N$ matrix $\mtx{A}$ to be the smallest positive number that satisfies
\begin{equation}
	\label{eq:RIPs}
	(1-\delta_\sparsity)\|\vct{x}\|^2_2 ~\leq~ \|\mtx{A} \vct{x}\|^2_2 ~\leq~ (1+\delta_\sparsity)\|\vct{x} \|^2_2
	\quad
	\text{ for all } \vct{x} \mbox{ with } \|\vct{x}\|_0\leq\sparsity.
\end{equation}
In words, the statement~\eqref{eq:RIPs} requires that all column submatrices of $\mtx{A}$ with at most $\sparsity$ columns are well-conditioned.
Informally, $\mtx{A}$ is said to satisfy the RIP with order $\sparsity$ when $\delta_{\sparsity}$ is ``small''. 

Now, if the matrix $\mtx{A}$ obeys \eqref{eq:RIPs} with
\begin{equation}\label{delta:kappa}
\delta_{\kappa \sparsity} ~<~ \delta^*
\end{equation}
for suitable  constants $\kappa \geq 1$ and $\delta^* < 1$, then many algorithms precisely recover any $\sparsity$-sparse vectors $\vct{x}$ from the measurements $\vct{y} = \mtx{A} \vct{x}$.  Moreover, if  $\vct{x}$ can be well approximated by  an $\sparsity$ sparse vector, then for noisy observations
\[
\vct{y} ~=~ \mtx{A} \vct{x} + \vct{e}
\quad\text{where}\quad
\|\vct{e}\|_2 \leq \tau,
\]
these algorithms return a reconstruction $\widetilde{\vct{x}}$ that satisfies an error bound of the form
\begin{equation}\label{rec:stability}
\| \vct{x} - \widetilde{\vct{x}} \|_2~\leq~ C_1 \frac{\sigma_\sparsity(\vct{x})_1}{\sqrt{\sparsity}} + C_2 \tau,
\end{equation}
where $\sigma_\sparsity(\vct{x})_1 = \inf_{\|\vct{z}\|_0 \leq \sparsity} \|\vct{x} - \vct{z}\|_1$ denotes the error of best $\sparsity$-term approximation in $\ell_1$ and $C_1, C_2$ are positive constants. For illustration, we include Table \ref{tableRIP} which lists 
available values  
for the constants $\kappa$ and $\delta^*$ in \eqref{delta:kappa} that guarantee \eqref{rec:stability} for several algorithms along with respective references.

\begin{table}[htdp]
\begin{center}
\begin{tabular}{|l|l|c|l|}
\hline
Algorithm & $\kappa$ & $\delta^*$ & References\\
\hline
$\ell_1$-minimization \eqref{l1:min} & $2$ & $\frac{3}{4+\sqrt{6}} \approx 0.4652$ & \cite{cawaxu10,ca08,carota06-1,fo10-3}\\
CoSaMP & $4$ & $\sqrt{\frac{2}{5 + \sqrt{73}}} \approx 0.3843$& \cite{fo10,netr08}\\
Iterative Hard Thresholding & $3$ & $1/2$& \cite{blda09,fo10-3}\\
Hard Thresholding Pursuit & $3$ & $1/\sqrt{3} \approx 0.5774$& \cite{fo10-2}\\
\hline
\end{tabular}
\caption{Values of the constants $\kappa$ and $\delta^*$ in \eqref{delta:kappa} that guarantee success for  various recovery algorithms.}
\label{tableRIP}
\end{center}
\end{table}%

For example, Gaussian random matrices, that is, matrices that have independent, normally distributed entries with mean zero and variance one,  have been shown \cite{badadewa08,cata06,mepato09} to have restricted isometry
constants of $\frac{1}{\sqrt{n}} \mtx{A}$ satisfy $\delta_\sparsity \leq \delta$ with high probability provided that
\[
n ~\geq~ C \delta^{-2} \sparsity \log(N/\sparsity).
\]
That is, the number $n$ of Gaussian measurements required to reconstruct an $\sparsity$-sparse signal of length $N$ is \emph{linear} in the sparsity and \emph{logarithmic} in the ambient dimension.
See \cite{badadewa08,cata06,mepato09,fora11,ra10} for precise statements and extensions to Bernoulli and subgaussian matrices.
It follows from lower estimates of Gelfand widths that this bound on the required
samples is optimal \cite{codade09,foparaul10,gagl84}, that is, the $\log$-factor must be present.

As discussed above, no deterministic construction of a measurement matrix is known which provides RIP with optimal scaling of the recoverable sparsity $\sparsity$ in the number of measurements $n$. 
In fact, all available proofs of the RIP with close to optimal scaling require  the measurement matrix to contain some randomness.
In Table~\ref{table:entropy} we list the Shannon entropy (in bits) of various random matrices along with the available RIP estimates. 
Compared to Gaussian random matrices, the Gabor synthesis measurement matrices constructed in this paper introduces only a small amount of randomness, that is, the presented measurement matrix depends only on the so-called Gabor window, a random vector of length $n$, which can be chosen to be a normalized copy of a Rademacher vector. 
Moreover, the random Gabor matrix provably provides scaling of $\sparsity$ roughly in $n^{2/3}$, which significantly improves on known deterministic constructions. Clearly, such scaling falls short of the optimal one, but we expect that
it is possible to establish linear scaling of $\sparsity$ in $n$ up to $\log$-factors, similar to Gaussian matrices or partial random Fourier matrices. However,
such improvement seems to require more powerful methods to estimate chaos processes than presently available.

\begin{table}[htdp]
\begin{center}
\begin{tabular}{|l|l|l|l|}
\hline
$n\times N$ Measurement matrix & Shannon entropy & RIP regime & References\\
\hline
Gaussian 
& $nN\,\frac 1 2 \log (2\pi e)$ & $s \leq C n / \log N$ & \cite{badadewa08,dota09,ruve08}\\
Rademacher entries & $nN $ & $s \leq C n / \log N$ & \cite{badadewa08}\\
Partial Fourier matrix & $N \log_2 N {-} n \log_2 n $ & $s \leq C n / \log^4 N$ & \cite{rarotr10,ruve08}\\
  & ${-} (N{-}n) \log_2 (N{-}n)$ &  & \\
Partial circulant Rademacher  & $N$ & $s \leq C n^{2/3} / \log^{2/3} N$ & \cite{rarotr10}\\
Gabor, Rademacher window & $n$ & $s \leq C n^{2/3} / \log^2 n$ & this paper \\
Gabor, Alltop window & 0 & $s \leq C \sqrt{n}$ & \cite{pfrata08}\\
\hline
\end{tabular}
\caption{List of measurement matrices that have been proven to be RIP, scaling of sparsity $s$ in the number of measurements $n$, and the respective Shannon entropy of the (random) matrix. }
\label{table:entropy}
\end{center}
\end{table}%

\subsection{Time-frequency structured measurement matrices}\label{section:time-frequencystructured}

In this paper, we provide probabilistic estimates of the restricted isometry constants for matrices whose columns
are time--frequency shifts of a randomly chosen vector.  To define these matrices, we let  $\mtx{T}$ denote the cyclic shift,
also called translation operator, and $\mtx{M}$ the modulation operator, or frequency shift operator, on $\C^n$. They are
defined by
 \begin{equation}\label{eq:trans_mod}
 (\mtx{T}
\vct{h})_q=h_{q\ominus 1}\quad\mbox{and}\quad (\mtx{M}\vct{h})_q = e^{2\pi i
q/n} h_q = \omega^q h_q,
\end{equation}
where $\ominus$ is subtraction modulo $\n$ and $\omega=e^{2\pi i/n}$.
Note that
\begin{equation}\label{eq:trans_mod2}
(\mtx{T}^\transk
\vct{h})_q=h_{q\ominus \transk}\quad\mbox{and}\quad (\mtx{M}^{\ell}\vct{h})_q = e^{2\pi i\ell
q/n} h_q = \omega^{\ell
q} h_q.
\end{equation}
The operators $\mtx{\pi}(\lambda) = \mtx{M}^\ell \mtx{T}^\transk$, $\lambda = (\transk,\ell)$,
are called time-frequency shifts and the system $\{\mtx{\pi}(\lambda): \lambda \in
\Z_n{\times}\Z_n\}$, $\Z_n = \{0,1,\ldots,n-1\}$, of all time-frequency shifts
forms a basis of the matrix space $\C^{n{\times} n}$ \cite{LPW05,krpfra07}.

We choose $\vct{\epsilon} \in \C^{n}$ to be a Rademacher or Steinhaus sequence, that is,
a vector of independent random variables
taking the values $+1$ and $-1$ with equal probability, respectively taking values uniformly distributed on the complex torus
$S^1 = \{z \in \C, |z| = 1\}$.
The normalized window is
\[
\vct{g}=n^{-1/2}\vct{\epsilon},
\]
and the set
\begin{equation}\label{GaborSystem}
\{\mtx{\pi}(\lambda) \vct{g}: \lambda \in \Z_n{\times} \Z_n\}
\end{equation}
is called a full Gabor system with window $\vct{g}$ \cite{Gro01}. The matrix $\mtx{\Psi}_{\vct{g}} \in \C^{n
{\times} n^2}$ whose columns list the members $\mtx{\pi}(\lambda) \vct{g}$, $\lambda
\in \Z_n{\times} \Z_n$, of the Gabor system is referred to as Gabor synthesis matrix \cite{Chr03,LPW05,PR09}. Note that
$\mtx{\Psi}_{\vct{g}}$ allows for fast matrix vector multiplication algorithms based on the FFT.
The main result of this paper addresses the restricted isometry constants of $\mtx{\Psi}_{\vct{g}}$.
Below $\E$ denotes expectation and $\P$ the probability of an event.

\begin{theorem}\label{theorem:main} Let $\mtx{\Psi}_{\vct{g}} \in \C^{n \times n^2}$ be a draw of the random Gabor synthesis matrix
with normalized Steinhaus or Rademacher generating vector.
\begin{enumerate}
\item[(a)]  The expectation of the restricted isometry constant $\delta_s$ of $\mtx{\Psi}_{\vct{g}}$, $s\leq n$, satisfies
  \begin{equation}
     \E\, \delta_\sparsity \leq \max\Big\{C_1 \sqrt{\frac{\sparsity^{3/2}}{n}} \log\sparsity\sqrt{ \log n}, \
C_2 \frac{\sparsity^{3/2}\log^{3/2} n}{n} \Big\} ,
\end{equation}
where $C_1,C_2> 0$ are universal constants.
\item[(b)]
For $0\leq \lambda\leq 1$, we have
\begin{equation}
    \P(\delta_s \geq \E[\delta_s] +\lambda) \leq e^{-\lambda^2/\sigma^2},\quad \text{ where } \sigma^2= \frac{C_3 s^{\frac 3 2} \log n \, \log^2s}{n}
\end{equation}
with $C_3>0$ being a universal constant.
\end{enumerate}
\end{theorem}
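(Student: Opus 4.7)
The plan is to rewrite $\|\mtx{\Psi}_{\vct{g}} \vct{x}\|_2^2$ as a Rademacher (or Steinhaus) chaos of order two and then apply an entropy bound from the generic-chaining theory. Concretely, expanding the square coordinate-wise,
\begin{equation*}
\|\mtx{\Psi}_{\vct{g}} \vct{x}\|_2^2 \;=\; \frac{1}{n}\sum_{q=0}^{n-1} \Big|\sum_{k,\ell} x_{(k,\ell)}\,\omega^{\ell q}\, \epsilon_{q\ominus k}\Big|^2 \;=\; \langle \vct{\epsilon}, \mtx{A}_{\vct{x}}\, \vct{\epsilon}\rangle,
\end{equation*}
where $\mtx{A}_{\vct{x}}$ is a Hermitian $n \times n$ matrix that depends linearly on the outer product $\vct{x}\vct{x}^{*}$. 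Using $\E[\epsilon_i \overline{\epsilon_j}]=\delta_{ij}$ one checks that $\E\langle\vct{\epsilon},\mtx{A}_{\vct{x}}\vct{\epsilon}\rangle = \tr(\mtx{A}_{\vct{x}}) = \|\vct{x}\|_2^2$, so that
\begin{equation*}
\delta_\sparsity \;=\; \sup_{\vct{x}} \big|\langle \vct{\epsilon}, \mtx{A}_{\vct{x}}\vct{\epsilon}\rangle - \E\langle \vct{\epsilon}, \mtx{A}_{\vct{x}}\vct{\epsilon}\rangle\big|,
\end{equation*}
with the supremum taken over unit-norm $\sparsity$-sparse $\vct{x}\in\C^{n^2}$. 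Thus part (a) amounts to bounding the expected supremum of a centred second-order chaos process.

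With this reformulation I would invoke Talagrand's Dudley-type entropy inequality for suprema of second-order chaos \cite{ta05-2} applied to the set of Hermitian matrices $\mathcal{M} = \{\mtx{A}_{\vct{x}}\}$. This estimate controls $\E\delta_\sparsity$ by the Talagrand $\gamma_2$-functional of $\mathcal{M}$ with respect to \emph{two} intrinsic metrics, namely the operator norm $\|\cdot\|_{2\to 2}$ and the Frobenius norm $\|\cdot\|_F$, together with the $\|\cdot\|_{2\to 2}$-diameter of $\mathcal{M}$. Bounding each $\gamma_2$-functional above by the corresponding Dudley entropy integral reduces everything to estimating the covering numbers $N(\mathcal{M},\|\cdot\|_{2\to 2},u)$ and $N(\mathcal{M},\|\cdot\|_F,u)$.

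The principal obstacle lies in these covering estimates. Unlike the partial-circulant setting of \cite{rarotr10}, the time-frequency structure does \emph{not} collapse to a partial Fourier problem, so the Rudelson-Vershynin bounds \cite{ruve08} cannot be imported off-the-shelf. The strategy is to bound the differences $\|\mtx{A}_{\vct{x}}-\mtx{A}_{\vct{y}}\|_{2\to 2}$ and $\|\mtx{A}_{\vct{x}}-\mtx{A}_{\vct{y}}\|_F$ by two complementary functionals of $\vct{x}-\vct{y}$: a sparse-vector semi-norm capturing the fine-scale behaviour of the quadratic map $\vct{x}\mapsto \mtx{A}_{\vct{x}}$, which will be estimated at small $u$ by volumetric arguments on the low-dimensional sparse faces, and a cruder global bound exploiting the near-tightness $\sum_\lambda \pi(\lambda)^*\pi(\lambda) = n\,\Id$ of the full Gabor frame to handle the large-scale regime. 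Feeding these covering bounds into the Dudley integrals and optimally splitting the entropy scales yields the announced bound for $\E\delta_\sparsity$ in part (a).

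For part (b) I would appeal to a deviation inequality for suprema of second-order chaos processes, such as the Hanson-Wright-style concentration of Krahmer-Mendelson-Rauhut or the original version in \cite{ta05-2}. In the sub-Gaussian regime $\lambda\le 1$ the tail is governed by the product of the Frobenius-norm diameter of $\mathcal{M}$ and its operator-norm $\gamma_2$-functional; both quantities have already been controlled during the proof of part (a), and inserting these estimates produces precisely the quoted variance proxy $\sigma^2 \lesssim \sparsity^{3/2}\log n\, \log^2\sparsity /n$.
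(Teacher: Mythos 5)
Your overall strategy for part (a) is the one the paper follows: rewrite $\delta_\sparsity$ as the supremum of a centred Rademacher/Steinhaus chaos of order two (your expansion of $\|\mtx{\Psi}_{\vct{g}}\vct{x}\|_2^2$ is correct, and the diagonal terms are in fact deterministic, which is why the paper can take $\mtx{B}(\vct{x})$ with zero diagonal), then apply a Dudley-type chaos bound and estimate covering numbers in the Frobenius and operator-norm metrics. Two points need correction or completion, however. First, the operator-norm metric does \emph{not} enter through a $\gamma_2$-functional: Talagrand's bound for order-two chaos is of the form $\gamma_1(T,d_1)+\gamma_2(T,d_2)$, so the $d_1=\|\cdot\|_{2\to 2}$ contribution is controlled by the \emph{subexponential} integral $\int_0^\infty \log N(T,d_1,u)\,du$, not $\int_0^\infty\sqrt{\log N(T,d_1,u)}\,du$. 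This is not cosmetic: the subexponential integral is exactly what produces the second term $C_2\, \sparsity^{3/2}\log^{3/2}n/n$ in the theorem, and using $\sqrt{\log N}$ for both metrics would claim an unjustified improvement. Second, the covering-number estimates are the entire technical content of the proof, and your plan only gestures at them. The small-$u$ regime is indeed volumetric, via the Lipschitz bounds $d_2(\vct{x},\vct{y})\le 2\sqrt{\sparsity n}\,\|\vct{x}-\vct{y}\|_2$ and $d_1(\vct{x},\vct{y})\le 2\sparsity\|\vct{x}-\vct{y}\|_2$ (the latter requires a Schatten-norm/trace computation using $\mtx{B}(\vct{x})=\sum_{\lambda\ne\lambda'}x_\lambda\overline{x}_{\lambda'}\omega(\lambda,\lambda')\mtx{\pi}(\lambda-\lambda')$). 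But the large-$u$ regime is handled by the empirical (Maurey) method: one approximates $\vct{x}$ by an average of $m$ random one-sparse vectors and controls $\E\|\mtx{B}-\mtx{B}(\vct{x})\|$ using the tightness identities $\sum_q \mtx{A}_q\mtx{P}_\lambda\mtx{A}_q^*=\mtx{I}$ and, for the operator norm, the noncommutative Khintchine inequality. Without naming this mechanism the "cruder global bound" remains an unproven assertion, and it is precisely here that the Gabor structure (as opposed to the Fourier reduction of \cite{ruve08}) has to be exploited.

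For part (b) you propose a different tool. The paper uses the Boucheron--Lugosi--Massart/Talagrand concentration inequality for suprema of chaos, with $U=\sup_{\vct{x}}\|\mtx{B}(\vct{x})\|_{2\to2}\le 2\sparsity$ and variance proxy $V=\E\sup_{\vct{x}}\|\mtx{B}(\vct{x})\vct{\epsilon}\|_2^2$, the latter bounded by a vector-valued Dudley inequality that reuses the $d_2$ entropy integral from part (a); this gives $V\lesssim n\sparsity^{3/2}\log n\log^2\sparsity$ and hence the stated $\sigma^2$. A Hanson--Wright-type deviation bound in terms of diameters and $\gamma_2$-functionals of the matrix set could plausibly be made to work, but your claim that it yields "precisely" the quoted $\sigma^2$ is not checked, and the relevant quantities ($V$ versus a product of a Frobenius diameter and an operator-norm $\gamma_2$) are not the same objects; you would need to verify that your variance proxy is dominated by $C\sparsity^{3/2}\log n\,\log^2\sparsity/n$ after the normalization by $n$, and that the linear-in-$\lambda$ term $U\lambda$ is negligible for $\lambda\le 1$.
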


With slight variations of the proof one can show similar statements for normalized Gaussian or subgaussian
random windows $\vct{g}$.

Roughly speaking $\mtx{\Psi}_{\vct{g}}$ satisfies the RIP of order $\sparsity$ with high
probability if $n \geq C s^{3/2} \log^{3}(n)$, or equivalently if,
\[
\sparsity \leq c n^{2/3} / \log^{2} n.
\]
We expect that this is not the optimal estimate, but improving on this seems to require
more sophisticated techniques than pursued in this paper. There are known examples \cite{leta91,ta05-2} for
which the central tool in this paper, the Dudley type inequality for chaos processes stated in Theorem~\ref{Dudley:chaos},
is not sharp.  We may well be  facing one of these cases here.

Numerical tests illustrating the use of $\mtx{\Psi}_{\vct{g}}$ for compressive sensing are presented in \cite{pfrata08}.
They illustrate that empirically $\mtx{\Psi}_{\vct{g}}$ performs very similarly to a Gaussian matrix.

\subsection{Application in wireless communications and radar}\label{section:applicationwireless}

An important task in wireless communications is to identify the communication channel at hand,
 that is, the channel opperator, by probing it
with a small number of known transmit signals; ideally a single probing signal. A common finite-dimensional model for the channel operator, that combines digital (discrete) to analog conversion, the analog channel, and analog to digital conversion. It is given by
\cite{be63,Cor01,grpf08,Pae01}
\[
 \mtx{\Gamma} = \sum_{\lambda \in \Z_n{\times} \Z_n} x_\lambda \mtx{\pi}(\lambda).
\]
Time-shifts model delay due to multipath-propagation, while frequency-shifts model the Doppler effect due to moving transmitter, receiver, and/or scatterers.
Physical considerations often suggest that $\vct{x}$ is rather sparse as, indeed, the number of present scatterers can be assumed to be small in most cases.
The same model is used as well in sonar \cite{mi87,st99}  and radar \cite{hest09}.

Our task is to identify from a single input output pair $(\vct{g},\mtx{\Gamma} \vct{g})$ the coefficient vector $x$.
In other words, we need to reconstruct
$\mtx{\Gamma} \in \C^{n{\times} n}$, or equivalently $\vct{x}$, from its action $\vct{y}= \mtx{\Gamma} \vct{g}$ on a single vector
$\vct{g}$. Writing
\begin{equation}\label{equation:matrixIdent}
  \vct{y} = \mtx{\Gamma} \vct{g} \,=\, \sum_{\lambda \in \Z_n {\times} \Z_n} x_\lambda \mtx{\pi}(\lambda) \vct{g}
  = \mtx{\Psi}_{\mtx{g}} \vct{x}
\end{equation}
with unknown but sparse $\vct{x}$, we arrive at a compressive sensing problem. In this setup, we clearly have the freedom to choose $\vct{g}$,
and we may choose it as a random Rademacher or Steinhaus sequence.
Then the restricted isometry property of
$\mtx{\Psi}_{\vct{g}}$, as shown in Theorem \ref{theorem:main}, ensures recovery of sufficiently sparse $\vct{x}$, and hence, of the
associated operator $\mtx{\Gamma}$.

Recovery of the sparse $\vct{x}$ in \eqref{equation:matrixIdent} can also be interpreted as finding a sparse time-frequency representation of a given $\vct{y}$ with respect to the window $\vct{g}$. From an application point of view though, the vectors considered here are not well suited to describe meaningful sparse time-frequency representations of $\vct{x}$ as all $\vct{g}$ that are known to guarantee RIP of $\mtx{\Psi}_{\vct{g}}$ are very poorly localized both in time and in frequency.

\subsection{Relation with previous work}\label{section:relationpreviouswork}

Time-frequency structured matrices $\mtx{\Psi}_{\vct{g}}$ appeared in the study of frames with (near-)optimal coherence. Recall that
the coherence of a matrix $\mtx{A} = (\vct{a}_1|\ldots|\vct{a}_N)$ with normalized columns $\|\vct{a}_\ell\|_2=1$ is defined as
\[
\mu := \max_{\ell \neq k} |\langle \vct{a}_\ell,\vct{a}_k\rangle|.
\]
Choosing the Alltop window \cite{al80,hest03} $\vct{g} \in \C^n$ with entries $g_\ell = n^{-1/2} e^{2 \pi i \ell^3/n}$ for $n\geq 5$ prime yields $\mtx{\Psi}_{\vct{g}}$  with coherence
\[
\mu = \frac{1}{\sqrt{n}}.
\]
Due to the general lower bound $\mu \geq \sqrt{\frac{N-n}{n(N-1)}}$ for an $n \times N$ matrix \cite{hest03},
this coherence is almost optimal.
Together with the bound $\delta_{\sparsity} \leq (s-1)\mu$ we obtain
\[
\delta_{\sparsity} \leq \frac{\sparsity-1}{\sqrt{n}}.
\]
This requires a scaling $\sparsity \leq c \sqrt{n}$ to achieve sufficiently small RIP and sparse recovery, which clearly is worse than the main result of this paper.

The coherence of $\mtx{\Psi}_{\vct{g}}$ with Steinhaus sequence $\vct{g}$ is estimated in \cite{pfrata08} by
\[
\mu \leq c \sqrt{\frac{\log(n/\varepsilon)}{n}},
\]
holding with probability at least $1-\varepsilon$. As before, this does not give better than quadratic scaling of $n$ in $\sparsity$
in order to have small RIP constants $\delta_s$.

The following nonuniform recovery results for $\ell_1$-minimization with $\mtx{\Psi}_{\vct{g}}$ and Steinhaus sequence $\vct{g}$
was derived in \cite{pfra10}.
\begin{theorem}\label{thm:nonuniform} Let $\vct{x} \in \C^n$ be $\sparsity$-sparse. Choose a Steinhaus sequence $\vct{g}$ at random. Then with probability at least $1-\varepsilon$, the vector $\vct{x}$ can be recovered from $\vct{y} = \mtx{\Psi}_{\vct{g}}\vct{x}$ via $\ell_1$-minimization provided
\[
\sparsity \leq c \frac{n}{\log(n/\varepsilon)}.
\]
\end{theorem}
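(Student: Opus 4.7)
\emph{Proof plan.} The plan is to exhibit a dual certificate for $\ell_1$-minimization and verify its properties with high probability over the Steinhaus window $\vct{g}$. Let $S\subseteq\Z_n\times\Z_n$ denote the (fixed) support of $\vct{x}$, set $\vct{\sigma} := \sgn(\vct{x}|_S)\in\C^S$, and write $\mtx{\Psi}_S$ for the $n\times s$ submatrix of $\mtx{\Psi}_{\vct{g}}$ indexed by $S$. By the standard duality characterization (Fuchs/Tropp), $\vct{x}$ is the unique solution of \eqref{l1:min} with $\vct{y}=\mtx{\Psi}_{\vct{g}}\vct{x}$ provided (i) $\mtx{\Psi}_S$ has full column rank, and (ii) there exists $\vct{h}\in\C^n$ with $(\mtx{\Psi}_{\vct{g}}^*\vct{h})_\lambda = \sigma_\lambda$ for $\lambda\in S$ and $|(\mtx{\Psi}_{\vct{g}}^*\vct{h})_\mu|<1$ for $\mu\in S^c$. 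On the event that $\mtx{\Psi}_S^*\mtx{\Psi}_S$ is invertible I would take as candidate the minimum-$\ell_2$-norm preimage $\vct{h} := \mtx{\Psi}_S(\mtx{\Psi}_S^*\mtx{\Psi}_S)^{-1}\vct{\sigma}$, which automatically fulfills the $S$-condition; only the $S^c$-condition then remains.

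Next I would control the conditioning of $\mtx{\Psi}_S^*\mtx{\Psi}_S$. Its diagonal is identically $1$ (since $\|\mtx{\pi}(\lambda)\vct{g}\|_2=1$), and its off-diagonal entries are, up to unimodular phases, values of the random ambiguity function
\[
A_{\vct{g}}(k,\ell) \;=\; \frac{1}{n}\sum_{q=0}^{n-1}\omega^{\ell q}\,\epsilon_{q\ominus k}\bar{\epsilon}_q, \qquad (k,\ell)\neq(0,0),
\]
a genuine order-two chaos in the Steinhaus sequence $\vct{\epsilon}$. Via a decoupling / non-commutative Khintchine (Hanson--Wright) argument for the zero-diagonal Hermitian matrix $\mtx{I}_S-\mtx{\Psi}_S^*\mtx{\Psi}_S$, I would establish
\[
\P\bigl(\|\mtx{I}_S-\mtx{\Psi}_S^*\mtx{\Psi}_S\|_{2\to 2}\geq 1/2\bigr) \;\leq\; \varepsilon/2
\]
whenever $s \leq c\,n/\log(n/\varepsilon)$. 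On the complementary event $\mathcal{E}$ one has $\|(\mtx{\Psi}_S^*\mtx{\Psi}_S)^{-1}\|_{2\to 2}\leq 2$, and $\mtx{\Psi}_S$ has full column rank, so requirement~(i) is in force.

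Conditional on $\mathcal{E}$, it remains to bound
\[
v_\mu \;:=\; (\mtx{\Psi}_{\vct{g}}^*\vct{h})_\mu \;=\; \vct{\sigma}^*(\mtx{\Psi}_S^*\mtx{\Psi}_S)^{-1}\mtx{\Psi}_S^*\mtx{\pi}(\mu)\vct{g}
\]
uniformly over $\mu\in S^c$. Treating $v_\mu$ as a scalar order-two Steinhaus chaos and expanding $(\mtx{\Psi}_S^*\mtx{\Psi}_S)^{-1}$ in Neumann series, I would use a Bernstein/Hoeffding-type tail bound (after decoupling) to obtain $\P(|v_\mu|\geq 1)\leq C\exp(-c\,n/s)$ for each fixed $\mu$; a union bound over the at most $n^2$ indices $\mu\in S^c$ then promotes this to $\|\vct{v}|_{S^c}\|_\infty<1$ with probability $\geq 1-\varepsilon/2$ as soon as $s\leq c\,n/\log(n/\varepsilon)$, which combined with $\mathcal{E}$ closes the argument. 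The main technical obstacle is that a crude norm-based estimate through $\|\vct{\sigma}\|_2=\sqrt{s}$ and $\|\mtx{\Psi}_S^*\mtx{\pi}(\mu)\vct{g}\|_2\lesssim\sqrt{s\log n/n}$ would only yield the suboptimal regime $s\lesssim\sqrt{n/\log n}$; one has to exploit that $v_\mu$ is a genuinely concentrated scalar chaos of variance $\lesssim s/n$, via decoupling and non-commutative moment inequalities, to recover the claimed linear-in-$n$ scaling. The same chaos-level sharpening is needed in the Gram-matrix step, since the rows of $\mtx{\Psi}_S$ all depend on the same $n$-dimensional Steinhaus sequence and a naive matrix-Bernstein argument for independent rank-one summands is unavailable.
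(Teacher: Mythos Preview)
The paper does not contain a proof of this theorem. Theorem~\ref{thm:nonuniform} is quoted in Section~\ref{section:relationpreviouswork} as a result from prior work \cite{pfra10}; it serves here only as context for comparison with the RIP estimate of Theorem~\ref{theorem:main}. There is therefore no ``paper's own proof'' to compare your attempt against.

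That said, your overall strategy --- verify the Fuchs/Tropp dual certificate by (i) controlling $\|\mtx{I}_S-\mtx{\Psi}_S^*\mtx{\Psi}_S\|_{2\to 2}$ as a matrix-valued Steinhaus chaos and (ii) bounding the off-support inner products $v_\mu$ individually and union-bounding --- is precisely the route taken in \cite{pfra10}. One point deserves more care than your sketch suggests: after the Neumann expansion $(\mtx{\Psi}_S^*\mtx{\Psi}_S)^{-1}=\sum_{k\geq 0}(\mtx{I}_S-\mtx{\Psi}_S^*\mtx{\Psi}_S)^k$, the quantity $v_\mu$ is \emph{not} an order-two chaos but a sum over $k$ of chaoses of order $2k{+}2$ in $\vct{\epsilon}$. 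The argument in \cite{pfra10} handles this by bounding the $2p$-th moment of each summand through an iterated application of scalar and noncommutative Khintchine inequalities, obtaining geometric decay in $k$ provided $s\lesssim n/\log(n/\varepsilon)$, and then summing. Your sentence ``treating $v_\mu$ as a scalar order-two Steinhaus chaos'' glosses over this; a direct Hanson--Wright or decoupling bound applied only at second order would not close the argument. Your identification of the obstacle --- that a crude norm estimate loses a factor $\sqrt{s}$ and yields only $s\lesssim\sqrt{n}$ --- is exactly right, and the cure is the moment analysis of the full Neumann series just described.
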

Clearly, the (optimal) almost linear scaling of $n$ in $\sparsity$ of this estimate is better than the RIP estimate of
the main Theorem \ref{theorem:main}. However, the conclusion is weaker than what can be derived
using the restricted isometry property: recovery in Theorem~\ref{thm:nonuniform} is nonuniform in the sense that a given $\sparsity$-sparse vector
can be recovered with high probability from a random draw of the matrix $\mtx{\Psi}_{\vct{g}}$. It is not stated that a single matrix
$\mtx{\Psi}_{\vct{g}}$ can recover all $\sparsity$-sparse vectors simultaneously. Moreover, nothing is said about the stability
of recovery, while in contrast,  small RIP constants imply \eqref{rec:stability}.
Therefore, our main Theorem \ref{theorem:main} is of high interest and importance, despite the better scaling
in Theorem \ref{thm:nonuniform}. Moreover, we expect that an improvement of the RIP estimate is possible, although
it is presently not clear how this can be achieved.

Partial random circulant matrices are a different, but closely related measurement matrix, studied in \cite{bahanora10,ra09,ra10,rarotr10}. They model convolution
with a random vector followed by subsampling on an arbitrary (deterministic) set. The so far best estimate of the
restricted isometry constants $\delta_\sparsity$ of such an $n \times N$ matrix in \cite{rarotr10}
requires $n \geq c (s \log N)^{3/2}$, similarly to the main result of this paper. The corresponding analysis requires
to bound as well a chaos process, which is also achieved by the Dudley type bound of Theorem \ref{Dudley:chaos} below.
Nonuniform recovery guarantees for partial random circulant matrices similarly to Theorem \ref{thm:nonuniform}
are contained in \cite{ra09,ra10}. The analysis of circulant matrices benefits from a simplified arithmetic in the Fourier domain, a tool not available to us in the case of Gabor synthesis matrices.  Hence, the analysis presented here is more involved.

\section{Expectation of the restricted isometry constants}\label{section:expectation}

We first estimate the expectation of the restricted isometry constants of the
random Gabor synthesis matrix, that is, we shall prove Theorem~\ref{theorem:main}(a).
To this end, we first rewrite the restricted isometry
constants $\delta_\sparsity$. Let $T = T_\sparsity = \{\vct{x} \in \C^{n^2},
\|\vct{x}\|_2 = 1, \|\vct{x}\|_0 \leq \sparsity\}$. Introduce the following semi-norm on
Hermitian matrices $A$,
$$
\nns{\mtx{A}}  = \sup_{\vct{x} \in T_\sparsity} |\vct{x}^* \mtx{A} \vct{x}|.
$$
Then the restricted isometry constants of $\mtx{\Psi}=\mtx{\Psi}_{\vct{g}}$ can be written as
$$
\delta_{\sparsity} = \nns{\mtx{\Psi}^* \mtx{\Psi} - \mtx{I}},
$$
where $\mtx{I}$ denotes the identity matrix.
Observe that the Gabor synthesis matrix $\mtx{\Psi}_{\vct{g}}$ takes the form
            $$
\mtx{\Psi}_{\vct{g}}=
\mymatrixNewNewSPECIALB{
            g_0& g_{n-1}   &\cdots & g_{1} & g_0  &\cdots & g_{1}& \quad \cdots   &  g_{1}  \\
             g_1& g_0   & \cdots  &     g_2 & \omega g_1 & \cdots  & \omega g_2 & \quad  \cdots  &  \omega^{n-1}g_{2}   \\
            g_2& g_1 &\cdots & g_3 &   \omega^2g_2 &\cdots & \omega^2 g_3  &  \quad \cdots  &  \omega^{2(n-1)}g_{3}   \\
            g_3& g_2 &\cdots & g_4 &   \omega^3g_3 & \cdots & \omega^3 g_4  &  \quad \cdots  &  \omega^{3(n-1)}g_{4}   \\
            \vdots &  \vdots & \ddots & \vdots &\vdots & \ddots & \vdots  &  &     \vdots \\
            g_{n-1}&g_{n-2} &\cdots  & g_0 &\omega^{n-1}  g_{n-1} &\cdots  & \omega^{n-1}g_0 &  \quad \cdots  &  \omega^{(n-1)^2}g_{0}        }\,
            .$$
Our analysis in this section employs the representation
$$\mtx{\Psi}_{\vct{g}}=\sum_{q=0}^{n-1} g_q \, \mtx{A}_q$$ with
\begin{eqnarray}
\mtx{A}_0&=& \mymatrixNewNewSPECIAL{
            1& 0 &0  &\cdots & 0 & 1& 0 &0  &\cdots & 0& \quad \cdots   &  0  \\
             0& 1 &0  & \cdots  &     0 & 0&\omega   &0  &  \cdots  &0& \quad \cdots  & 0   \\
            0& 0&1 &\cdots & 0 &  0&0&\omega^2  &\cdots & 0 &  \quad \cdots  &0   \\
            \vdots & \vdots & \vdots & \ddots & \vdots &\vdots & \vdots & \vdots & \ddots & \vdots  &  &     \vdots \\
            0&0&0 &\cdots  & 1 &0&0&0&\cdots  & \omega^{n-1} &  \quad \cdots  &  \omega^{(n-1)^2}        }\,\notag\\
            &=& \big(\mtx{I} \big|\mtx{M}\big|\mtx{M}^2\big| \cdots \big|\mtx{M}^{n-1}\big),\notag  \\
\mtx{A}_1&=&
\mymatrixNewNewSPECIAL{
            0& 0 &0  &\cdots & 1 &0& 0 &0  &\cdots & 1& \quad \cdots   & 1 \\
             1& 0 &0  & \cdots  &     0 & \omega &0 &0  &\cdots  &0 &  \quad  \cdots  &  0  \\
           0& 1&0 &\cdots & 0& 0&\omega^2 & 0 &\cdots &0 &  \quad \cdots  &0   \\
            \vdots & \vdots & \vdots & \ddots & \vdots &\vdots & \vdots & \vdots & \ddots & \vdots  &  &     \vdots \\
            0&0&0&\cdots  & 0 &0&0&0&\cdots  & 0 &  \quad \cdots  & 0        }\,\notag\\
           & = & \big(\mtx{T}\big|\mtx{MT}\big|\mtx{M}^2\mtx{T}\big| \cdots \big|\mtx{M}^{n-1}\mtx{T}\big),\notag
\end{eqnarray}
and so on. In short, for $q\in\Z_n$,
\begin{equation}
  \mtx{A}_q          = \big(\mtx{T}^q\big|\mtx{M}\mtx{T}^q\big|\mtx{M}^2\mtx{T}^q\big| \cdots \big|\mtx{M}^{n-1}\mtx{T}^q\big). \label{eq:def-Aq}
\end{equation}
Observe that
$$
\mtx{H} := \mtx{\Psi}^*\mtx{\Psi} - \mtx{I} = - \mtx{I} +  \frac{1}{n} \sum_{q,q' = 0}^{n-1}  \overline{\epsilon_{q'}} \epsilon_q  \, \mtx{A}^\ast_{q'} \mtx{A}_q\,.
$$
Using \eqref{equation:sumWqqIsId} below, it follows that
\begin{equation}\label{def:H}
\mtx{H} = \frac{1}{n} \sum_{q'\neq q}\overline{\epsilon_{q'}}\, \epsilon_q\, \mtx{A}^\ast_{q'} \mtx{A}_q
= \frac 1 n  \sum_{q', q}\overline{\epsilon_{q'}}\, \epsilon_q\, \mtx{W}_{q',q},
\end{equation}
where, for notational simplicity, we use here and in the following  $ \mtx{W}_{q',q}=
\mtx{A}^\ast_{q'} \mtx{A}_q$ for $q\neq q'$ and $ \mtx{W}_{q',q}=0 $ for $q= q'$.
We
employ the matrix $\mtx{B}(\vct{x})\in \C^{n\times n}$, $\vct{x}\in T_\sparsity$, given by matrix entries
\begin{equation}
  B(\vct{x})_{q',q}=\vct{x}^\ast \mtx{W}_{q',q} \vct{x}.\label{eq:def-B(x)}
\end{equation}
Then we have
\begin{equation}\label{eqn:expect:delta}
    n\,\E\delta_\sparsity =\E \sup_{\vct{x}\in T_\sparsity} |Y_{\vct{x}} |=\E \sup_{\vct{x}\in T_\sparsity} |Y_{\vct{x}} - Y_{\vct{0}} |\,,
\end{equation}
where
\begin{equation}\label{chaos:process}
Y_{\vct{x}} \,=\, \vct{\epsilon}^\ast \mtx{B}(\vct{x})\vct{\epsilon}= \sum_{q'\neq q}\overline{\epsilon_{q'}}\, \epsilon_q\,  \vct{x}^*\mtx{A}^\ast_{q'}
\mtx{A}_q \vct{x}
\end{equation}
and $\vct{x}\in T_\sparsity= \{\vct{x} \in \C^{n\times n}, \|\vct{x}\|_2 \leq 1, \|\vct{x}\|_0 \leq
\sparsity\}$.
A process of the type \eqref{chaos:process} is called Rademacher or Steinhaus chaos process of order $2$.
In order to bound such a process, we use the following 
Theorem, see for example, \cite[Theorem 11.22]{leta91} or \cite[Theorem 2.5.2]{ta05-2},
where it is stated
for Gaussian processes and in terms of majorizing measure (generic chaining) conditions. The formulation below requires
the operator norm $\|\mtx{A}\|_{2 \to 2} = \max_{\|\vct{x}\|_2=1} \|\mtx{A}\vct{x}\|_2$ and the Frobenius norm
$\|\mtx{A}\|_F = \Tr (\mtx{A}^* \mtx{A})^{1/2} = (\sum_{j,k} |A_{j,k}|^2)^{1/2}$, where $\Tr(\mtx{A})$
denotes the trace of a matrix $\mtx{A}$.

\begin{theorem}\label{Dudley:chaos}
Let $\vct{\epsilon}=(\epsilon_1,\ldots,\epsilon_n)^T$ be a Rademacher or Steinhaus sequence,
and let
\[
Y_{\vct{x}} := \vct{\epsilon}^\ast \mtx{B}(\vct{x}) \vct{\epsilon}= \sum_{q',q=1}^n  \overline{\epsilon_{q'}} \epsilon_q B(\vct{x})_{q',q}
\]
be an associated chaos process of order $2$, indexed by $x \in T$, where we
additionally assume $\mtx{B}(\vct{x})$ hermitian with zero diagonal, that is,
$B(\vct{x})_{q,q} = 0$ and $B(\vct{x})_{q',q} = \overline{B(\vct{x})_{q,q'}}$.
We define two (pseudo-)metrics on $T$,
	\begin{align*}
	d_1(\vct{x},\vct{y}) &=
		\|\mtx{B}(\vct{x})-\mtx{B}(\vct{y})\|_{2 \to 2},\notag\\
		d_2(\vct{x},\vct{y}) &= \|\mtx{B}(\vct{x}) - \mtx{B}(\vct{y})\|_F. 
	\end{align*}
	 Let $N(T,d_i,u)$ be the minimum number of balls of radius $u$ in the metric $d_i$
	 needed to cover $T$.  Then there exists a universal constant $K>0$ such that, for an arbitrary $\vct{x_0} \in T$,
	\begin{equation}
		\label{eq:dudley-chaos}
		\E\sup_{\vct{x}\in T} |Y_{\vct{x}} - Y_{\vct{x_0}}| ~\leq~ K\max\Big\{  \int_0^\infty \log N(T,d_1,u)~du
		\int_0^\infty \sqrt{\log N(T,d_2,u)}~du, \
	    \Big\}.
	\end{equation}
\end{theorem}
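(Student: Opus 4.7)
My plan is to prove this via the classical three-step program for handling quadratic chaos processes: decoupling, a mixed-tail Bernstein (Hanson--Wright) estimate for increments, and a generic chaining argument that runs simultaneously in the two metrics $d_1$ and $d_2$.

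First I would decouple. Because $\mtx{B}(\vct{x})$ is Hermitian with zero diagonal, the chaos $Y_{\vct{x}} = \sum_{q\ne q'}\overline{\epsilon_{q'}}\epsilon_q B(\vct{x})_{q',q}$ is a mean-zero off-diagonal quadratic form, and the standard decoupling inequality of Bourgain--Tzafriri / de la Pe\~na--Gin\'e lets me replace it, at the cost of a universal constant, by the bilinear form $\widetilde Y_{\vct{x}} = \sum_{q,q'} \overline{\epsilon_{q'}'}\epsilon_q B(\vct{x})_{q',q}$, where $\vct{\epsilon}'$ is an independent copy of $\vct{\epsilon}$. This reduces the analysis of $\E\sup_{\vct{x}\in T}|Y_{\vct{x}}-Y_{\vct{x_0}}|$ to that of $\E\sup_{\vct{x}\in T}|\widetilde Y_{\vct{x}}-\widetilde Y_{\vct{x_0}}|$, which is easier because conditioning on one copy leaves a linear Rademacher/Steinhaus sum.

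Next I would establish a Hanson--Wright-type tail bound for the increments. For fixed $\vct{x},\vct{y}\in T$, set $\mtx{C}=\mtx{B}(\vct{x})-\mtx{B}(\vct{y})$. The standard Bernstein-type estimate for Rademacher/Steinhaus chaos of order two yields
\begin{equation*}
\P\bigl(|Y_{\vct{x}}-Y_{\vct{y}}|\ge t\bigr) \le 2\exp\!\left(-c\min\!\left(\frac{t^2}{\|\mtx{C}\|_F^2},\ \frac{t}{\|\mtx{C}\|_{2\to 2}}\right)\right),
\end{equation*}
which I would derive by conditioning on $\vct{\epsilon}'$ (after decoupling), applying Hoeffding in $\vct{\epsilon}$ to obtain a conditional sub-Gaussian bound with variance proxy $\|\mtx{C}\vct{\epsilon}'\|_2^2$, and then controlling $\|\mtx{C}\vct{\epsilon}'\|_2$ via a second Bernstein step using $\E\|\mtx{C}\vct{\epsilon}'\|_2^2=\|\mtx{C}\|_F^2$ and the Lipschitz constant $\|\mtx{C}\|_{2\to 2}$. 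The upshot is that the increments of $(Y_{\vct{x}})_{\vct{x}\in T}$ are sub-Gaussian with respect to $d_2$ on small scales and sub-exponential with respect to $d_1$ on large scales.

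Finally I would run generic chaining with both metrics. Pick $\vct{x_0}\in T$ and an admissible sequence of nested partitions $(\mathcal{A}_k)_{k\ge 0}$ of $T$ with $|\mathcal{A}_k|\le 2^{2^k}$, along with anchor points $\pi_k(\vct{x})\in\mathcal{A}_k(\vct{x})$. Telescoping $Y_{\vct{x}}-Y_{\vct{x_0}}=\sum_{k\ge 1}(Y_{\pi_k(\vct{x})}-Y_{\pi_{k-1}(\vct{x})})$ and applying the mixed tail bound at each level, a union bound over the at most $2^{2^{k+1}}$ consecutive pairs shows
\begin{equation*}
\E\sup_{\vct{x}\in T}|Y_{\vct{x}}-Y_{\vct{x_0}}| \;\lesssim\; \gamma_2(T,d_2)+\gamma_1(T,d_1),
\end{equation*}
where $\gamma_\alpha$ are Talagrand's functionals. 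The conversion of each $\gamma_\alpha$ into the corresponding Dudley-type entropy integral, $\gamma_2(T,d_2)\lesssim\int_0^\infty\sqrt{\log N(T,d_2,u)}\,du$ and $\gamma_1(T,d_1)\lesssim\int_0^\infty\log N(T,d_1,u)\,du$, is a standard device (choose $\mathcal{A}_k$ from minimal $u_k$-nets with $u_k=2^{-k}\mathrm{diam}$). Combining these two bounds yields the stated inequality.

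The main obstacle will be the simultaneous chaining in two metrics: I cannot simply add two separate chainings because at each scale the union bound must split the exponent between the sub-Gaussian and sub-exponential regimes according to which metric dominates at that scale. The right way around this is the generic chaining formulation, where one uses a single admissible sequence and separately estimates its $\gamma_2$-cost in $d_2$ and $\gamma_1$-cost in $d_1$; verifying that the levels $k$ can be pooled consistently, and that the resulting tail estimate integrates to a bounded expectation, is the delicate quantitative step. Once this is done, passing from $\gamma_\alpha$ functionals to Dudley entropy integrals is routine.
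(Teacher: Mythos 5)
Your proposal is correct, but it takes a genuinely different route from the paper. The paper does not reprove the Rademacher case at all: it simply cites Proposition 2.2 of the partial-circulant paper \cite{rarotr10} (which ultimately rests on Talagrand's two-metric chaining bound for Gaussian chaos, \cite[Theorem 11.22]{leta91}), and the entire content of the paper's proof is the reduction of the Steinhaus case to the Rademacher case: split $\vct{\epsilon}^*\mtx{B}\vct{\epsilon}$ into the four real bilinear forms in $\Re(\vct{\epsilon})$ and $\Im(\vct{\epsilon})$, decouple via \cite[Theorem 3.1.1]{gide99}, replace the bounded symmetric variables $\Re(\epsilon_\ell),\Im(\epsilon_\ell)$ by genuine Rademacher variables using the contraction principle, and check that $\Re(\mtx{B})$ and $\Im(\mtx{B})$ have smaller $d_1$ and $d_2$ increments than $\mtx{B}$. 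You instead sketch a self-contained proof: decoupling, a Hanson--Wright mixed tail for the increments (sub-Gaussian in $d_2$, sub-exponential in $d_1$), two-metric generic chaining giving $\gamma_1(T,d_1)+\gamma_2(T,d_2)$, and the standard passage from $\gamma_\alpha$-functionals to Dudley entropy integrals. All of these ingredients are sound -- the mixed-tail chaining bound is Talagrand's Theorem 1.2.7, and the interleaving of two admissible sequences that you flag as the delicate step is handled exactly as you suggest (take common refinements, shifting the index by one) -- so your argument is a valid, and indeed more transparent, alternative. What the paper's route buys is brevity and a clean separation between the known real/Rademacher statement and the new complex/Steinhaus extension; what yours buys is independence from the citation and a uniform treatment of both distributions, since the conditional Hoeffding step and the concentration of $\|\mtx{C}\vct{\epsilon}'\|_2$ work verbatim for variables bounded on the torus. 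One small point to be careful about if you write this up in full: the statement's right-hand side is a $\max$ of the two entropy integrals rather than their sum, but this only changes the universal constant $K$ by a factor of two.
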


\begin{proof} For a Rademacher sequence, the theorem is stated in \cite[Proposition 2.2]{rarotr10}.
If $\vct{\epsilon}$ is a Steinhaus sequence and $\mtx{B}$ a Hermitian matrix then
\begin{align}
\vct{\epsilon}^* \mtx{B} \vct{\epsilon}  = \Re(\vct{\epsilon}^* \mtx{B} \vct{\epsilon}) & = \Re(\vct{\epsilon})^* \Re(\mtx{B}) \Re(\mtx{\epsilon})
- \Re(\vct{\epsilon})^* \Im(\mtx{B}) \Im(\vct{\epsilon}) \notag\\
&\quad + \Im(\vct{\epsilon})^* \Im(\mtx{B}) \Re(\vct{\epsilon}) + \Im(\vct{\epsilon})^* \Re(\mtx{B}) \Im(\vct{\epsilon}).  \notag
\end{align}
By decoupling, see,  for example,  \cite[Theorem 3.1.1]{gide99}, we have with $\epsilon'$ denoting an independent copy of
$\epsilon$,
\begin{align}
&\E \sup_{x \in T} |\Re(\vct{\epsilon})^* \Im (\mtx{B}(\vct{x})) \Im(\vct{\epsilon})| \leq 8 \, \E \sup_{x \in T} | \Re(\vct{\epsilon})^* \Im(\mtx{B}(\vct{x})) \Im(\vct{\epsilon}') |\notag\\
&\leq 8 \,\E \sup_{x \in T} | \vct{\xi}^* \Im(\mtx{B}(\vct{x})) \Im(\vct{\epsilon}')| \leq 8 \, \E \sup_{x \in T}  |\vct{\xi}^* \Im(\mtx{B}(\vct{x})) \vct{\xi}'|,\notag
\end{align}
where $\vct{\xi},\vct{\xi}'$ denote independent Rademacher sequences. The second and third inequalities follow from the
contraction principle \cite[Theorem 4.4]{leta91} (and symmetry of $\Re(\epsilon_\ell), \Im(\epsilon_\ell)$\, ) first applied conditionally on $\vct{\epsilon}'$ and then conditionally
on $\vct{\xi}$ (note that $|\Re(\epsilon_\ell)| \leq 1$, $|\Im(\epsilon_\ell)|\leq 1$ for all realizations of $\epsilon_\ell$). Using the triangle inequality we get
\begin{align}
\E \sup_{x \in T} | Y_{\vct{x}} - Y_{\vct{x_0}}| & \leq 16\, \E \sup_{x \in T} |\vct{\xi}^* (\Re(\mtx{B}(\vct{x})) - \Re(\mtx{B}(x_0)) \xi'|
\notag\\
&+ 16\, \E \sup_{x \in T} |\vct{\xi}^* (\Im(\mtx{B}(\vct{x})) - \Im(\mtx{B}(x_0))) \vct{\xi}'|.
\end{align}
Further note that $\|\Im(\mtx{B}(\vct{x})) - \Im(\mtx{B}(\vct{y})) \|_F,\ \|\Re(\mtx{B}(\vct{x})) - \Re(\mtx{B}(\vct{y})) \|_F \leq \|\mtx{B}(\vct{x}) - \mtx{B}(\vct{y})\|_F$ and similarly, writing $\mtx{B}(\vct{x})-\mtx{B}(\vct{y})$ as a $ 2n{\times}2n$ real block matrix acting on $\R^{2n}$ we see that also
$\|\Im(\mtx{B}(\vct{x})) - \Im(\mtx{B}(\vct{y})) \|_{2\to 2},\ \|\Re(\mtx{B}(\vct{x})) - \Re(\mtx{B}(\vct{y})) \|_{2\to 2} \leq \|\mtx{B}(\vct{x}) - \mtx{B}(\vct{y})\|_{2\to 2}$. Furthermore, the statement for Rade\-macher chaos processes holds
as well for decoupled chaos processes of the form above.
(Indeed, its proof uses decoupling in a crucial way.)
Therefore, the claim for Steinhaus sequences follows. 
\end{proof}

Note that $\mtx{B}(\vct{x})$ defined in \eqref{eq:def-B(x)} satisfies the hypotheses of
Theorem~\ref{Dudley:chaos} by definition. The pseudo-metrics are given by
\begin{align}
d_2(\vct{x},\vct{y}) \,=\, \| \mtx{B}(\vct{x}) - \mtx{B}(\vct{y}) \|_F =
     \Big( \sum_{q'\neq q}\big|\vct{x}^*\mtx{A}^\ast_{q'} \mtx{A}_q \vct{x} - \vct{y}^*\mtx{A}^\ast_{q'} \mtx{A}_q \vct{y}\big|^2\Big)^{1/2},
\end{align}
and
\[
d_1(\vct{x},\vct{y}) \,= \|\mtx{B}(\vct{x}) - \mtx{B}(\vct{y})\|_{2 \to 2}.
\]
The bound on the expected restricted isometry constant follows then from the following estimates on
the covering numbers of $T_s$ with respect to $d_1$ and $d_2$. Corresponding proofs will be detailed in Section
\ref{sec:cover:proofs}.
We start with $N(T_\sparsity,d_2,u)$.

\begin{lemma}\label{lem:d2:small} For $u > 0$, it holds
\[
\log(N(T_\sparsity,d_2,u)) \leq \sparsity \log(en^2/\sparsity)+ \sparsity \log(1+4 \sqrt{\sparsity n} u^{-1}).
\]
\end{lemma}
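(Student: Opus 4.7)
My plan is to control $N(T_s, d_2, u)$ by combining a Lipschitz estimate of the map $\vct{x} \mapsto \mtx{B}(\vct{x})$ (from $\ell_2$ to the Frobenius norm) with the standard combinatorial/volumetric estimate of $N(T_s, \|\cdot\|_2, \delta)$. The first term $s \log(en^2/s)$ in the statement comes from the number of support choices $\binom{n^2}{s} \leq (en^2/s)^s$; the second term arises from a volumetric cover of the unit ball on a fixed support at scale $u/L$, where $L \asymp \sqrt{sn}$ is the Lipschitz constant of $\mtx{B}$.

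For the Lipschitz step I will expand entry-by-entry
\[
\vct{x}^*\mtx{A}_{q'}^*\mtx{A}_q\vct{x} - \vct{y}^*\mtx{A}_{q'}^*\mtx{A}_q\vct{y} \;=\; (\vct{x}-\vct{y})^*\mtx{A}_{q'}^*\mtx{A}_q\vct{x} + \vct{y}^*\mtx{A}_{q'}^*\mtx{A}_q(\vct{x}-\vct{y}),
\]
apply $|a+b|^2\leq 2|a|^2+2|b|^2$, and invoke the identity
\[
\sum_{q,q'}|\langle \mtx{A}_{q'}\vct{u}, \mtx{A}_q\vct{v}\rangle|^2 \;=\; \Tr(\mtx{R}_{\vct{u}}\mtx{R}_{\vct{v}}), \qquad \mtx{R}_{\vct{w}} := \sum_q (\mtx{A}_q\vct{w})(\mtx{A}_q\vct{w})^*,
\]
together with the trace identity $\Tr(\mtx{R}_{\vct{z}}) = n\|\vct{z}\|_2^2$, which follows from the orthogonality relation $\sum_q \mtx{A}_q^*\mtx{A}_q = n\,\Id_{n^2}$. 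Since $\Tr(\mtx{C}\mtx{D}) \leq \|\mtx{C}\|_{2\to 2}\Tr(\mtx{D})$ for positive semidefinite $\mtx{C},\mtx{D}$, the Lipschitz estimate reduces to the single bound $\|\mtx{R}_{\vct{x}}\|_{2\to 2} \leq s$ for every $\vct{x}\in T_s$.

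The latter is the heart of the argument and is where sparsity enters. For any unit vector $\vct{h}\in\C^n$,
\[
\vct{h}^*\mtx{R}_{\vct{x}}\vct{h} \;=\; \sum_q|\langle \vct{x}, \mtx{A}_q^*\vct{h}\rangle|^2 \;\leq\; \sum_q\sum_{\lambda\in \supp(\vct{x})} |(\mtx{A}_q^*\vct{h})_\lambda|^2
\]
by Cauchy--Schwarz on the support of $\vct{x}$. A direct computation using $\mtx{M}^\ell \mtx{T}^q e_i = \omega^{\ell(i+q)}e_{i+q}$ gives $(\mtx{A}_q^*\vct{h})_{(\ell,i)} = \omega^{-\ell(i+q)} h_{i+q\,\mathrm{mod}\, n}$, so the modulus of each entry depends only on the translation coordinate $i+q$. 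Summing $|h_{i+q}|^2$ over $q\in\Z_n$ yields $\|\vct{h}\|_2^2$ for every fixed support index $\lambda$, and therefore $\|\mtx{R}_{\vct{x}}\|_{2\to 2}\leq|\supp(\vct{x})|\leq s$. This yields $L = O(\sqrt{sn})$, and combining with the standard bound $N(T_s,\|\cdot\|_2,\delta)\leq (en^2/s)^s(1+2/\delta)^{2s}$ at $\delta = u/L$ produces a bound of the form claimed (up to numerical constants absorbed into the second logarithm).

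The main obstacle is precisely the operator-norm bound $\|\mtx{R}_{\vct{x}}\|_{2\to 2}\leq s$: without exploiting the Gabor group structure to reduce $|(\mtx{A}_q^*\vct{h})_\lambda|^2$ to $|h_{i+q}|^2$, one is left only with the trivial trace estimate $\Tr(\mtx{R}_{\vct{x}}) = n$, which would degrade $L$ from $\sqrt{sn}$ to $n$ and turn the covering bound into the useless scaling $s\log(1 + n\sqrt{n}/u)$.
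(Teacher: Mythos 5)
Your proof is correct, and its outer skeleton --- the Lipschitz bound $d_2(\vct{x},\vct{y})\le 2\sqrt{\sparsity n}\,\|\vct{x}-\vct{y}\|_2$ on $T_\sparsity$ obtained from the bilinear splitting, followed by the volumetric estimate for $N(T_\sparsity,\|\cdot\|_2,\cdot)$ and rescaling --- is exactly the paper's. Where you genuinely diverge is in the derivation of the key inequality $\sum_{q,q'}|\vct{x}^*\mtx{A}_{q'}^*\mtx{A}_q\vct{y}|^2\le \sparsity\, n\,\|\vct{y}\|_2^2$, which is the paper's \eqref{equation:boundUsingSparsity}. The paper applies Cauchy--Schwarz on $\supp\vct{x}$ to each entry of $\mtx{A}_{q'}^*\mtx{A}_q\vct{y}$, computes those entries explicitly as values of $\sqrt{n}\,\mathcal{F}_2\vct{y}$, and invokes unitarity of the partial Fourier transform when summing over $q,q'$. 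You instead rewrite the double sum as $\Tr(\mtx{R}_{\vct{x}}\mtx{R}_{\vct{y}})$ with $\mtx{R}_{\vct{w}}=\sum_q(\mtx{A}_q\vct{w})(\mtx{A}_q\vct{w})^*$, split it by trace duality $\Tr(\mtx{C}\mtx{D})\le\|\mtx{C}\|_{2\to2}\Tr(\mtx{D})$, obtain $\Tr(\mtx{R}_{\vct{y}})=n\|\vct{y}\|_2^2$ from the identity $\sum_q\mtx{A}_q^*\mtx{A}_q=n\,\mtx{I}$ (the paper's \eqref{equation:sumWqqIsId}), and prove $\|\mtx{R}_{\vct{x}}\|_{2\to2}\le\sparsity$ by Cauchy--Schwarz on the support together with $|(\mtx{A}_q^*\vct{h})_{(k,\ell)}|=|h_{k+q}|$, so that the $q$-sum collapses to $\|\vct{h}\|_2^2$ with no Fourier analysis at all. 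Both computations exploit the same two ingredients (Cauchy--Schwarz restricted to the support, then an exact summation over the translation parameter), but your version isolates the sparsity dependence in the clean operator-norm statement $\|\mtx{R}_{\vct{x}}\|_{2\to2}\le\|\vct{x}\|_0$ and correctly pinpoints it as the step that would fail without the group structure, whereas the paper runs a single chained entrywise computation. The only numerical deviation is your use of the exponent $2\sparsity$ in the volumetric bound for the complex ball, which introduces a harmless factor $2$ in front of the second logarithm compared with the form $(1+2/u)^{\sparsity}$ that the paper takes from \cite{ra10}.
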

The above estimate is useful only for small $u > 0$. For large $u$ we require the following alternative bound.

\begin{lemma}\label{lem:d2:large} The diameter of $T_s$ with respect to $d_2$ is bounded by $4\sqrt{\sparsity n}$, and for
$\sqrt{n} \leq u \leq 4\sqrt{\sparsity n}$, it holds
\[
\log(N(T_\sparsity,d_2,u)) \leq  c u^{-2} n \sparsity^{3/2} \log(n \sparsity^{5/2} u^{-1}),
\]
where $c > 0$ is universal constant.
\end{lemma}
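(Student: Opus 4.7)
I would begin by rewriting $B(\vct{x})$ in a structurally convenient form. Define $X=X(\vct{x})\in \C^{n\times n}$ by $X_{p,q}=\sum_{\ell}\omega^{\ell p}x_{(p-q,\ell)}$; a direct computation using $\mtx{W}_{q',q}=\mtx{A}_{q'}^{*}\mtx{A}_{q}$ yields $B(\vct{x})_{q',q}=(X^{*}X)_{q',q}$ for $q\neq q'$, so $\|B(\vct{x})\|_{F}\leq \|X^{*}X\|_{F}$. Parseval in the second coordinate gives $\|X\|_{F}^{2}=n\|\vct{x}\|_{2}^{2}=n$. The block decomposition $X=\sum_{j\in J}X^{(j)}$, where each $X^{(j)}$ is supported on the shifted diagonal $\{q=p-j\}$ with entries $\hat x_{j,p}:=\sum_{\ell}\omega^{\ell p}x_{(j,\ell)}$, gives $\|X^{(j)}\|_{\mathrm{op}}=\|\hat x_{j,:}\|_{\infty}\leq\|x_{j,:}\|_{1}$, whence $\|X\|_{\mathrm{op}}\leq\|\vct{x}\|_{1}\leq\sqrt{\sparsity}$ by the triangle inequality and Cauchy--Schwarz. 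Schatten interpolation $\|X^{*}X\|_{F}\leq\|X\|_{F}\|X\|_{\mathrm{op}}\leq\sqrt{\sparsity n}$ then establishes the diameter bound $\operatorname{diam}(T_{\sparsity},d_{2})\leq 2\sqrt{\sparsity n}\leq 4\sqrt{\sparsity n}$.

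\textbf{Maurey's empirical method.} For the covering estimate at large $u$, I would apply Maurey's empirical method to the expansion $B(\vct{x})=\sum_{k_{1},k_{2}\,:\,j_{1}\neq j_{2}}\bar x_{k_{1}}x_{k_{2}}\tilde R_{k_{1},k_{2}}$, where $\tilde R_{k_{1},k_{2}}$ is the off-diagonal part of $(X^{(k_{1})})^{*}X^{(k_{2})}$ -- a matrix supported on a single shifted diagonal with Frobenius norm $\sqrt n$. Drawing independent indices $\nu_{1},\nu_{2}$ with distribution $\mu(k)=|x_{k}|/\|\vct{x}\|_{1}$ and forming the random atom $a=\|\vct{x}\|_{1}^{2}\,\overline{\sgn(x_{\nu_{1}})}\sgn(x_{\nu_{2}})\tilde R_{\nu_{1},\nu_{2}}$ gives $\E a = B(\vct{x})$, and the empirical average $\hat B_{M}=M^{-1}\sum_{m}a^{(m)}$ satisfies the Hilbert-space variance bound $\E\|\hat B_{M}-B(\vct{x})\|_{F}^{2}\leq M^{-1}\E\|a\|_{F}^{2}$. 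The crux is a careful estimate of $\E\|a\|_{F}^{2}$: splitting contributions by the first coordinates $j_{1},j_{2}$ and invoking both the Parseval identity $\sum_{p,j}|\hat x_{j,p}|^{2}=n$ and the sparsity-derived bound $\max_{p}\sum_{j}|\hat x_{j,p}|^{2}\leq\sparsity$ (which follows from $|\hat x_{j,p}|^{2}\leq|J_{j}|\|x_{j,:}\|_{2}^{2}$ together with $\sum_{j}|J_{j}|\leq \sparsity$), one obtains the refined bound $\E\|a\|_{F}^{2}\lesssim n\sparsity^{3/2}$. Choosing $M\asymp n\sparsity^{3/2}/u^{2}$ then yields $\E\|\hat B_{M}-B(\vct{x})\|_{F}\leq u$, so a deterministic center with this accuracy exists.

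\textbf{Counting centers and the main obstacle.} The empirical centers $\hat B_{M}$ are parameterized by $M$-tuples of index pairs $(\nu_{1},\nu_{2})\in[n^{2}]^{2}$ together with quantized phases; a phase quantization to precision $\sim u/(n\sparsity^{3/2})$ contributes an additional $O(\log(n\sparsity^{5/2}/u))$-bit factor per sample, which is also sufficient for the worst-case phase error to remain absorbed in the variance budget. Multiplying through, $\log N(T_{\sparsity},d_{2},u)\leq M\cdot c\log(n\sparsity^{5/2}/u)\leq c'\,u^{-2}\,n\,\sparsity^{3/2}\log(n\sparsity^{5/2}/u)$, as claimed. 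The hard part is the sharp variance estimate $\E\|a\|_{F}^{2}\lesssim n\sparsity^{3/2}$: the crude worst-case bound $\|a\|_{F}\leq\|\vct{x}\|_{1}^{2}\sqrt n\leq\sparsity\sqrt n$ yields only $\sparsity^{2}n$ and loses a factor $\sqrt{\sparsity}$. Recovering the correct exponent $\sparsity^{3/2}$ requires balancing the $\ell_{1}$-mass of the sampling weights against the Parseval mass of the Fourier coefficients $\hat x_{j,p}$, exploiting the hidden dispersion that the $\sparsity$-sparsity of $\vct{x}$ imposes on the dual coordinates $(j,p)$; this is the most delicate bookkeeping step of the argument.
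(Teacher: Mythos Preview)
Your diameter argument is correct and in fact slightly sharper than the paper's (which simply invokes the earlier Lipschitz bound $d_2(\vct{x},\vct{y})\leq 2\sqrt{\sparsity n}\|\vct{x}-\vct{y}\|_2$). The overall strategy of Maurey's empirical method is also the right one. But the heart of your argument---the claimed variance bound $\E\|a\|_F^2\lesssim n\sparsity^{3/2}$---is false as stated, and your sketch does not contain a mechanism that could produce it.

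Here is the issue. With your sampling scheme, each atom has \emph{deterministic} Frobenius norm: whenever the first coordinates of $\nu_1,\nu_2$ differ, $\|\tilde R_{\nu_1,\nu_2}\|_F=\sqrt{n}$ (as you note), so $\|a\|_F=\|\vct{x}\|_1^2\sqrt{n}$; otherwise $a=0$. Hence
\[
\E\|a\|_F^2 \;=\; n\,\|\vct{x}\|_1^4\cdot\P(j_1\neq j_2)\;\leq\; n\,\|\vct{x}\|_1^4\;\leq\; n\sparsity^2,
\]
and the upper bound $n\sparsity^2$ is attained (up to constants) when $\vct{x}$ is uniform on $\sparsity$ indices with pairwise distinct first coordinates. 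There is nothing to ``split'' or ``balance'': the Parseval identity and the bound $\max_p\sum_j|\hat x_{j,p}|^2\leq\sparsity$ are statements about $X$, not about the constant $\|a\|_F$. Single-index Maurey therefore gives only $M\asymp n\sparsity^2/u^2$ and hence $\log N\lesssim u^{-2}n\sparsity^2\log(\cdot)$, losing the crucial factor $\sqrt{\sparsity}$.

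The paper recovers the exponent $\sparsity^{3/2}$ by using a \emph{bilinear} empirical approximation: it draws two independent empirical measures $\{\vct{Z}_j\}_{j=1}^m$ and $\{\vct{Z}_{j'}'\}_{j'=1}^m$ and approximates $\mtx{B}(\vct{x})$ by $\frac{1}{m^2}\sum_{j,j'}\mtx{B}(\vct{Z}_j,\vct{Z}_{j'}')$. Expanding $\E\|\cdot\|_F^2$ now involves four sample indices; the generic terms cancel against $\|\mtx{B}(\vct{x})\|_F^2$, and the remaining ``coincidence'' terms are controlled via the Gabor identity $\sum_q \mtx{A}_q\mtx{P}_\lambda\mtx{A}_q^*=\mtx{I}$. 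The key point is that for a \emph{partial} coincidence (say $j'=j''$ but $j\neq j'''$) one gets $n\|\vct{x}\|_\ast^2\,\E[\vct{Z}_j^*]\E[\vct{Z}_{j'''}]=n\|\vct{x}\|_\ast^2\|\vct{x}\|_2^2\leq 2n\sparsity$, rather than the $4n\sparsity^2$ coming from a full coincidence. There are $\sim m^3$ partial-coincidence terms versus $\sim m^2$ full ones, yielding the variance bound $\frac{4n\sparsity^2}{m^2}+\frac{4n\sparsity}{m}$, which is $\leq u^2$ for $m\asymp n\sparsity^{3/2}/u^2$ in the stated range of $u$. This decoupling of the two factors of $\vct{x}$ is precisely what your single-index scheme cannot exploit, and it---not any Parseval-type dispersion estimate---is where the missing $\sqrt{\sparsity}$ is hiding.
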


Covering number estimates with respect to $d_1$ are provided in the following lemma.

\begin{lemma}\label{lem:d1} The diameter of $T_s$ with respect to $d_1$ is bounded by $4s$, and for $u > 0$
\begin{align}
\log(N(T_\sparsity,d_1,u)) \leq
\min& \left\{ \sparsity \log(en^2/s) + \sparsity \log(1+4\sparsity u^{-1}), \right.\notag\\
& \left. \;\;\; c u^{-2} \sparsity^2 \log(2n) \log(n^2/u)\right\},\label{eqn:d1:estimate}
\end{align}
where $c > 0$ is a universal constant.
\end{lemma}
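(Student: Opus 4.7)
The proof splits into three ingredients that I would tackle in order: a diameter estimate, a Lipschitz/volumetric covering bound valid for all $u>0$, and a Maurey-type empirical covering bound that is sharper in an intermediate range of $u$. Throughout I would work with the linear map $V:\C^{n^2}\to\C^{n\times n}$ whose $q$-th column is $\mtx A_q\vct x$, so that $V(\vct x)\vct g=\mtx\Psi_{\vct g}\vct x$. This identity immediately yields the crucial estimate $\|V(\vct x)\|_{2\to 2}\le\|\vct x\|_1\le\sqrt{s}$ for $\vct x\in T_s$. Since $\mtx B(\vct x)=V(\vct x)^*V(\vct x)-\diag(V(\vct x)^*V(\vct x))$ and both pieces have operator norm at most $s$, the triangle inequality gives $\|\mtx B(\vct x)\|_{2\to 2}\le 2s$ and hence $\diam(T_s,d_1)\le 4s$.

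For the volumetric bound I would exploit linearity of $V$: writing $\mtx B(\vct x)-\mtx B(\vct y)=V(\vct x)^*V(\vct x-\vct y)+V(\vct x-\vct y)^*V(\vct y)$ modulo an analogous diagonal correction, and using that $\vct x-\vct y$ is $2s$-sparse so $\|V(\vct x-\vct y)\|_{2\to 2}\le\sqrt{2s}\,\|\vct x-\vct y\|_2$, I obtain the Lipschitz estimate $d_1(\vct x,\vct y)\le C\,s\,\|\vct x-\vct y\|_2$. Combining this with the standard covering number bound $N(T_s,\|\cdot\|_2,\rho)\le\binom{n^2}{s}(1+2/\rho)^{2s}$ for $s$-sparse unit vectors and choosing $\rho=u/(Cs)$ produces the first term in the minimum.

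For the Maurey-type bound I would linearize via the outer product: $\mtx B(\vct x)=\mathcal{B}(\vct x\vct x^*)$ where $\mathcal{B}(\mtx Z)_{q',q}=\tr(\mtx W_{q',q}\mtx Z)$ is linear in $\mtx Z$ and automatically vanishes on the diagonal. Setting $p_\lambda=|x_\lambda|/\|\vct x\|_1$ and $\sigma_\lambda=\sgn(x_\lambda)$, I would take $\vct Z,\vct W$ to be independent random atoms of the form $\sigma_\Lambda\vct e_\Lambda$ with $\P(\Lambda=\lambda)=p_\lambda$, so that $\vct x\vct x^*=\|\vct x\|_1^2\,\E[\vct Z\vct W^*]$, and form the empirical estimator $\widehat{\mtx B}=\|\vct x\|_1^2 M^{-1}\sum_{i=1}^M\mathcal{B}(\vct Z_i\vct W_i^*)$ from i.i.d.\ pairs $(\vct Z_i,\vct W_i)$. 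The atomic bound $\|\mathcal{B}(\vct e_\lambda\vct e_\mu^*)\|_{2\to 2}\le 2$, which follows from $\|V(\vct e_\lambda)\|_{2\to 2}=1$, supplies the hypotheses for Tropp's noncommutative Bernstein inequality, yielding $\E\|\widehat{\mtx B}-\mtx B(\vct x)\|_{2\to 2}\le C\bigl(s\sqrt{\log(2n)/M}+s\log(2n)/M\bigr)$. Choosing $M\asymp s^2\log(2n)/u^2$ makes this at most $u/2$, and a Markov argument then exhibits a deterministic sequence of atoms achieving error $\le u$. Once the sign torus $S^1$ is discretized to precision $\sim u/s$ so that the rounding error is absorbed, the number of admissible sequences is at most $(n^2\cdot s/u)^{2M}$, and taking logarithms produces $\log N(T_s,d_1,u)\le cM\log(n^2/u)\le cu^{-2}s^2\log(2n)\log(n^2/u)$, as claimed. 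The main obstacle I expect is verifying the Bernstein hypotheses with a variance proxy of order $s^2$ rather than the naive $sn$, and keeping the sign discretization fine enough without inflating $M$ by additional logarithmic factors.
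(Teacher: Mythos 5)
Your proposal is correct and reaches both covering bounds by the same two high-level strategies as the paper (a Lipschitz-plus-volumetric argument for the first term, Maurey's empirical method for the second), but the execution of each half is genuinely different, and in the first half your route is cleaner. For the diameter and the Lipschitz estimate, the key observation that $V(\vct{x})=\sum_\lambda x_\lambda\mtx{\pi}(\lambda)$ (so that $\|V(\vct{x})\|_{2\to2}\le\|\vct{x}\|_1$ by unitarity of the time--frequency shifts) together with the factorization $\mtx{B}(\vct{x})=V(\vct{x})^*V(\vct{x})-\diag(V(\vct{x})^*V(\vct{x}))$ replaces the paper's considerably longer computation, which bounds $d_1$ by Schatten $S_{2p}$-norms, evaluates traces of products of time--frequency shifts, and lets $p\to\infty$. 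The price is a worse constant: the paper obtains $d_1(\vct{x},\vct{y})\le 2s\|\vct{x}-\vct{y}\|_2$ exactly, while your decomposition yields $d_1(\vct{x},\vct{y})\le Cs\|\vct{x}-\vct{y}\|_2$ with $C$ around $4\sqrt{2}$; this only perturbs the constant inside the logarithm of the first bound (and the explicit constant $4$ in the statement), but note the paper reuses the sharp bound $U=\sup_{\vct{x}}\|\mtx{B}(\vct{x})\|_{2\to2}\le 2s$ in the proof of Theorem~\ref{theorem:main}(b), so constants there would degrade slightly. For the second bound, the paper randomizes real and imaginary parts separately (so the atoms carry phases in $\{1,-1,i,-i\}$ and only the amplitude $\|\vct{x}\|_*\in[1,2s]$ needs discretizing) and controls $\E\|\widehat{\mtx{B}}-\mtx{B}(\vct{x})\|_{2\to2}^{2p}$ via symmetrization, the noncommutative Khintchine inequality, and a moments-to-tails lemma, whereas you use complex signs and the matrix Bernstein inequality; the two are equivalent in strength here, and your extra discretization of the phase torus to precision $\sim u/s$ costs only a constant inside $\log(n^2/u)$. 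Your flagged obstacle about the variance proxy is resolved precisely by the atomic identity that drives the paper's argument: since $\mtx{B}(\vct{e}_{\lambda'},\vct{e}_\lambda)^*\mtx{B}(\vct{e}_{\lambda'},\vct{e}_\lambda)=\mtx{I}$ (equivalently, your bound $\|\mathcal{B}(\vct{e}_\lambda\vct{e}_\mu^*)\|_{2\to2}\le 2$ from $\|V(\vct{e}_\lambda)\|_{2\to2}=1$), each summand has $\E[X_i^*X_i]$ of operator norm $O(\|\vct{x}\|_1^4/M^2)=O(s^2/M^2)$, giving the variance proxy $O(s^2/M)$ rather than the naive $O(sn/M)$ one would get from a Frobenius-type bound.
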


Based on these estimates and Theorem \ref{Dudley:chaos}
we  complete the proof of Theorem \ref{theorem:main}(a). By Lemmas \ref{lem:d2:small} and \ref{lem:d2:large},
the subgaussian integral in \eqref{eq:dudley-chaos} can be estimated as
\begin{align}
&\int_0^\infty \sqrt{\log(N(T_\sparsity,d_2,u))} du =
\int_0^{4 \sqrt{\sparsity n}}  \sqrt{\log(N(T_\sparsity,d_2,u))} du\notag\\
& = \int_0^{\sqrt{n}}  \sqrt{\log(N(T_s,d_2,u))} du +  \int_{\sqrt{n}}^{\sqrt{\sparsity n}} \sqrt{\log(N(T_s,d_2,u))} du\notag\\
& \leq \int_0^{\sqrt{n}} \sqrt{\sparsity \log(en^2/\sparsity)} du + \int_0^{\sqrt{n}} \sqrt{\sparsity \log(1+4 \sqrt{\sparsity n} u^{-1})} du
\notag\\
&+ c \sqrt{n \sparsity^{3/2}} \int_{\sqrt{n}}^{4\sqrt{\sparsity n}} u^{-1} \sqrt{\log(n \sparsity^{5/2} u^{-1}) }du \notag\\
& \leq \sqrt{\sparsity n \log( en^2/\sparsity)} + 4 \sparsity \sqrt{n} \int_0^{\sparsity^{-1/2}} \sqrt{\log(1+u^{-1})} du\notag\\
&+ c \sqrt{\sparsity^{3/2}n} \sqrt{\log(n^{1/2} \sparsity^{5/2})} \log(\sqrt{s})
\notag\\
& \leq  \sqrt{\sparsity n \log( en^2/\sparsity)} + 4 \sqrt{\sparsity n} \sqrt{\log(e(1+\sqrt{\sparsity}))} + c' \sqrt{\sparsity^{3/2} n \log(n) \log^2(\sparsity)}\notag\\
& \leq \hat{C}_1 \sqrt{\sparsity^{3/2} n \log(n) \log^2(\sparsity)}.\label{entropy:int1a}
\end{align}
Hereby, we have used \cite[Lemma 10.3]{ra10}, and that $\sparsity \leq n$.
Due to Lemma \ref{lem:d1} the subexponential integral obeys the estimate, for some $\kappa > 0$ to be chosen below,
\begin{align}
&\int_0^\infty \log(N(T_\sparsity,d_1,u)) du = \int_0^{4\sparsity} \log(N(T_\sparsity,d_1,u)) du\notag\\
&= \int_0^{\kappa} \log(N(T_\sparsity,d_1,u)) du + \int_{\kappa}^{4\sparsity} \log(N(T_\sparsity,d_1,u)) du\notag\\
& \leq \kappa \sparsity \log(en^2/\sparsity) + \sparsity \int_0^\kappa \log(1+4\sparsity u^{-1}) du
+ c \sparsity^2 \log(2n) \int_{\kappa}^{4\sparsity} u^{-2} \log(n^2/u) du\notag\\
& \leq \kappa \sparsity \log(en^2/\sparsity) + 4 \kappa \sparsity  \log(e(1+\kappa(4\sparsity)^{-1}))
+  c \sparsity^2 \kappa^{-1} \log(2n) \log(n^2/\kappa).\notag
\end{align}
Choose $\kappa = \sqrt{\sparsity \log(n)}$ to reach
\begin{align}
\int_0^\infty \log(N(T_\sparsity,d_1,u)) du \leq \hat{C}_2 \sparsity^{3/2} \log^{3/2}(n).\label{eqn:subexponential}
\end{align}
Combining the above integral estimates with \eqref{eqn:expect:delta} and Theorem \ref{Dudley:chaos} yields
\begin{equation}
\E \delta_s = \frac{1}{n} \E \sup_{x \in T_s} |Y_{\vct{x}} - Y_0 | \leq \frac{1}{n} \max\left\{C_1  \sqrt{\sparsity^{3/2} n \log(n) \log^2(\sparsity)} ,
C_2 \sparsity^{3/2} \log^{3/2}(n) \right\}.\label{Edelta:estimate}
\end{equation}
This is the statement of Theorem \ref{theorem:main}(a).

\begin{remark} \rm In analogy to the estimate of a subgaussian entropy integral arising in the analysis of partial random circulant
matrices in \cite{rarotr10}, we expect that the exponent $3/2$ in \eqref{entropy:int1a} can be improved to $1$. However, we doubt that
for the subexponential integral \eqref{eqn:subexponential} such improvement will be possible
(indeed, the estimate of the subexponential integral in \cite{rarotr10} also exhibits an exponent of $3/2$ at the $\sparsity$-term),
so that we did not pursue an improvement of \eqref{entropy:int1a} here as this would not provide a significant
overall improvement of \eqref{Edelta:estimate}. We expect that an improvement of \eqref{Edelta:estimate} would require more
sophisticated tools than the Dudley type estimate for chaos processes of Theorem \ref{Dudley:chaos}.
\end{remark}

\section{Proof of covering number estimates}
\label{sec:cover:proofs}

In this section we provide the covering number estimates of Lemma \ref{lem:d2:small}, \ref{lem:d2:large} and \ref{lem:d1},
which are crucial to the proof of our main result. We first introduce additional notation.
Let $\delta(m,k) = \delta_{0,m-k}$ and $\delta(m) = \delta_{0,m}$ be the Kronecker symbol as usual. We denote by $\supp \vct{x} = \{\ell, x_\ell \neq 0\}$ the support of a vector $\vct{x}$.
Let $\mtx{A}$ be a matrix with vector of singular values $\vct{\sigma}(\mtx{A})$. For $0<q\leq \infty$, the Schatten $S_q$-norm is defined by
\begin{align}\label{def:Sp}
\|\mtx{A}\|_{S_q} := \|\vct{\sigma}(\mtx{A})\|_q,
\end{align}
where $\|\cdot\|_q$ is the usual vector $\ell_q$ norm.
For an integer $p$, the $S_{2p}$ norm can be expressed as
\begin{eqnarray} \|\mtx{A}\|_{S_{2p}} = (\Tr((\mtx{A}^*
\mtx{A})^p))^{1/(2p)}. \label{eqn:Schatten2mNorm}
\end{eqnarray}
The $S_\infty$-norm coincides with the operator norm, $\|\cdot\|_{S_\infty} = \|\cdot\|_{2 \to 2}$.
By the corresponding properties of $\ell_q$-norms we have the inequalities
\begin{equation}\label{eqn:Schatten:ineq}
\|\mtx{A}\|_{2 \to 2} \leq \|\mtx{A}\|_{S_q} \leq \operatorname{rank}(\mtx{A})^{1/q} \|\mtx{A}\|_{2 \to 2}.
\end{equation}
Moreover, we will require an extension of the quadratic form $\mtx{B}(\vct{x})$ in \eqref{eq:def-B(x)} to a bilinear form,
\begin{equation}\label{def_Bx}
(\mtx{B}(\vct{x},\vct{z}))_{q',q} = \Big\{ \begin{array}{ll} \vct{x}^*\mtx{A}^\ast_{q'} \mtx{A}_q \vct{z} & \mbox{ if } q' \neq q,\\
0 & \mbox{ if } q' = q.
\end{array} \Big.
\end{equation}
Then $\mtx{B}(\vct{x})=\mtx{B}(\vct{x},\vct{x})$.

\subsection{Time--frequency analysis on $\C^n$}\label{section:t-f-analysis}

Before passing to the actual covering number estimates we provide some facts and estimates related to
time-frequency analysis on $\C^n$. Observe that the matrices $\mtx{A}_q$ introduced in
\eqref{eq:def-Aq} satisfy
$$
 \mtx{A}_q^\ast          = \left(
                       \begin{array}{c}
                         (\mtx{T}^q)^\ast
                         \\ (\mtx{M}\mtx{T}^q)^\ast
                         \\ (\mtx{M}^2\mtx{T}^q)^\ast \\ \vdots \\ (\mtx{M}^{n-1}\mtx{T}^q)^\ast
                       \end{array}
                     \right)
                     = \left(
                       \begin{array}{c}
                         \mtx{T}^{-q}
                         \\ \mtx{T}^{-q}\mtx{M}^{-1}
                         \\ \mtx{T}^{-q}\mtx{M}^{-2} \\ \vdots \\ \mtx{T}^{-q} \mtx{M}^{1}
                       \end{array}
                     \right),
 $$
 and, hence, $$(\mtx{A}_q^\ast \vct{y} )_{(k,\ell)}=y_{k+q}\ \omega^{-\ell(k+q)}.$$
 Clearly,
  \begin{eqnarray}
        \langle \mtx{A}_{q}\vct{z}, \vct{y} \rangle & = \langle \vct{z},  \mtx{A}_{q}^\ast \vct{y} \rangle
        = \sum_{k,\ell}  z_{(k,\ell)} \overline{y}_{k+q}  \omega^{\ell (k+q)}= \sum_{k,\ell}  z_{(k-q,\ell)} \overline{y}_{k}
         \omega^{\ell k}\notag\\
         & = \sum_{k} \big(\sum_{\ell} z_{(k-q,\ell)}  \omega^{\ell k} \big)\overline{y}_{k}\notag
        \end{eqnarray}
and, hence,
$$
(\mtx{A}_q \vct{z})_k = \sum_{\ell} z_{(k-q,\ell)}  \omega^{\ell k}.
$$
In the following, $\mtx{\mathcal F}:\C^n\mapsto \C^n$ denotes the normalized Fourier transform, that is,
\[
 (\mtx{\mathcal F} \vct{v})_\ell= n^{-1/2}\sum_{q=0}^{n-1} \omega ^{-q\ell} v_q.
\]
For $\vct{v}\in\C^{n\times n}$, $\mtx{\mathcal F}_2 \vct{v}$ denotes the Fourier transform in the second variable of $ v$.

Let $\{\vct{e}_\lambda\}_{\lambda\in \Z_n{\times
\Z_n}}$ and $\{\vct{e}_q\}_{q\in \Z_n}$ denoting the Euclidean basis of
$\C^{n\times n}$ respectively   $\C^n$, and, let $\mtx{P}_{\lambda}$ denote the
orthogonal projection onto the one dimensional space ${\rm span}\, \{ \vct{e}_{\lambda}\}$.
The following bounds will be crucial for the covering number estimates below.
\begin{lemma}\label{lemma:Gabor}
Let $\mtx{A}_q$ be as given in \eqref{eq:def-Aq}.  Then, for $\lambda \in \Z_n{\times
\Z_n}$, $q\in \Z_n$,
\begin{align} \label{eqn:AqPi}
  \mtx{A}_q \vct{e}_{\lambda} &= \mtx{\pi}(\lambda) \vct{e}_q\,,\\
  \sum_{q=0}^{n-1}\mtx{A}_q^\ast \mtx{A}_q & = n\, \mtx{I}\, \label{equation:sumWqqIsId},\\
  \sum_{q=0}^{n-1}\mtx{A}_q \mtx{P}_{\lambda} \mtx{A}_q^\ast & =  \mtx{I}\, \label{equation:sumWqqIsId2}\,, \\
   \sum_{q=0}^{n-1} \sum_{q'=0}^{n-1} \big| \vct{x}^\ast \mtx{A}_{q'}^\ast \mtx{A}_q \vct{y}  \big|^2 & \leq n\, \| \vct{x}\|_0  \, \|\vct{x}\|_2^2\,
    \| \vct{y}  \|_2^2 \label{equation:boundUsingSparsity}.
 \end{align}
\end{lemma}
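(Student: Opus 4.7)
The plan is to prove the four assertions in order, using the first one as the backbone that makes the others essentially one-line computations.

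First, for \eqref{eqn:AqPi}, I would just read off the column structure of $\mtx{A}_q$ in \eqref{eq:def-Aq}: the column indexed by $\lambda=(k,\ell)$ of $\mtx{A}_q=(\mtx{T}^q \mid \mtx{M}\mtx{T}^q \mid \cdots \mid \mtx{M}^{n-1}\mtx{T}^q)$ is $\mtx{M}^\ell\mtx{T}^q\vct{e}_k$. A direct calculation with \eqref{eq:trans_mod2} gives $\mtx{M}^\ell\mtx{T}^q\vct{e}_k=\omega^{\ell(k+q)}\vct{e}_{k+q}=\mtx{M}^\ell\mtx{T}^k\vct{e}_q=\mtx{\pi}(\lambda)\vct{e}_q$, which is \eqref{eqn:AqPi}.

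For \eqref{equation:sumWqqIsId} I would use \eqref{eqn:AqPi} to express
\[
(\mtx{A}_q^\ast\mtx{A}_q)_{\lambda,\mu}=\langle \mtx{\pi}(\mu)\vct{e}_q,\mtx{\pi}(\lambda)\vct{e}_q\rangle
=\omega^{\ell(k+q)-\ell'(k'+q)}\delta_{k,k'},
\]
where $\lambda=(k,\ell)$, $\mu=(k',\ell')$. Summing in $q$ and invoking the character orthogonality $\sum_q\omega^{(\ell-\ell')q}=n\delta_{\ell,\ell'}$ gives $n\delta_{\lambda,\mu}$, proving \eqref{equation:sumWqqIsId}. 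For \eqref{equation:sumWqqIsId2}, I would combine \eqref{eqn:AqPi} with the definition of $\mtx{P}_\lambda$ to rewrite $\mtx{A}_q\mtx{P}_\lambda\mtx{A}_q^\ast=(\mtx{\pi}(\lambda)\vct{e}_q)(\mtx{\pi}(\lambda)\vct{e}_q)^\ast=\mtx{\pi}(\lambda)\mtx{P}_q\mtx{\pi}(\lambda)^\ast$; summing over $q$ produces $\mtx{\pi}(\lambda)(\sum_q\mtx{P}_q)\mtx{\pi}(\lambda)^\ast=\mtx{\pi}(\lambda)\mtx{I}\mtx{\pi}(\lambda)^\ast=\mtx{I}$ since time-frequency shifts are unitary.

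The key inequality \eqref{equation:boundUsingSparsity} is the real content. The idea is to recognize $\mtx{A}_q\vct{y}$ as the action on $\vct{e}_q$ of the operator $\mtx{\Gamma}_{\vct{y}}:=\sum_{\lambda}y_\lambda\mtx{\pi}(\lambda)$, which follows immediately from \eqref{eqn:AqPi} by linearity. Assembling the $\mtx{A}_q\vct{y}$ as the columns of an $n\times n$ matrix thus identifies that matrix with $\mtx{\Gamma}_{\vct{y}}$ itself, and likewise for $\mtx{\Gamma}_{\vct{x}}$. The double sum then rewrites as
\[
\sum_{q,q'}|\vct{x}^\ast\mtx{A}_{q'}^\ast\mtx{A}_q\vct{y}|^2
=\sum_{q,q'}|(\mtx{\Gamma}_{\vct{x}}^\ast\mtx{\Gamma}_{\vct{y}})_{q',q}|^2
=\|\mtx{\Gamma}_{\vct{x}}^\ast\mtx{\Gamma}_{\vct{y}}\|_F^2
\leq\|\mtx{\Gamma}_{\vct{x}}\|_{2\to 2}^2\,\|\mtx{\Gamma}_{\vct{y}}\|_F^2.
\]
Using \eqref{equation:sumWqqIsId} one computes $\|\mtx{\Gamma}_{\vct{y}}\|_F^2=\sum_q\|\mtx{A}_q\vct{y}\|_2^2=\vct{y}^\ast(n\mtx{I})\vct{y}=n\|\vct{y}\|_2^2$. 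Finally, since each $\mtx{\pi}(\lambda)$ is unitary, the triangle inequality followed by Cauchy--Schwarz over $\supp\vct{x}$ yields $\|\mtx{\Gamma}_{\vct{x}}\|_{2\to 2}\leq\sum_{\lambda\in\supp\vct{x}}|x_\lambda|\leq\sqrt{\|\vct{x}\|_0}\,\|\vct{x}\|_2$, giving \eqref{equation:boundUsingSparsity}.

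The only non-routine step is recognizing $\mtx{\Gamma}_{\vct{x}}$ as the matrix whose $q$-th column is $\mtx{A}_q\vct{x}$; once this identification is made, the remaining work reduces to the standard Frobenius-versus-operator-norm inequality and the crude unitary bound on the spreading representation. No finer spectral information about the $\mtx{\pi}(\lambda)$ is required, which is consistent with the fact that \eqref{equation:boundUsingSparsity} is just a universal ``deterministic'' bound used later to control the chaos process.
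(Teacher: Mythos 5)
Your proposal is correct, and items \eqref{eqn:AqPi}--\eqref{equation:sumWqqIsId2} follow essentially the paper's own route: the paper verifies \eqref{eqn:AqPi} entrywise and \eqref{equation:sumWqqIsId} by applying $\mtx{A}_{q'}^\ast\mtx{A}_q$ to a test vector, but the substance --- the translation/modulation algebra plus orthogonality of the characters $q\mapsto\omega^{(\ell-\ell')q}$, and for \eqref{equation:sumWqqIsId2} the observation that $\mtx{A}_q\mtx{P}_\lambda\mtx{A}_q^\ast$ is a rank-one orthogonal projection --- is identical.

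The genuine divergence is in \eqref{equation:boundUsingSparsity}. The paper applies Cauchy--Schwarz coordinatewise over $\supp\vct{x}$ and then evaluates the remaining double sum \emph{exactly}: it recognizes $\sum_\ell\omega^{\ell(k'+q')}y_{(k'-(q-q'),\ell)}$ as $\sqrt{n}$ times an entry of $\mtx{\mathcal F}_2\vct{y}$ and invokes unitarity of $\mtx{\mathcal F}_2$, so the only inequality in its argument is that single Cauchy--Schwarz step. You instead identify the array $(\vct{x}^\ast\mtx{A}_{q'}^\ast\mtx{A}_q\vct{y})_{q',q}$ with $\mtx{\Gamma}_{\vct{x}}^\ast\mtx{\Gamma}_{\vct{y}}$, where $\mtx{\Gamma}_{\vct{y}}=\sum_\lambda y_\lambda\mtx{\pi}(\lambda)$ is precisely the channel operator of Section~\ref{section:applicationwireless}, and then combine $\|\mtx{A}\mtx{B}\|_F\le\|\mtx{A}\|_{2\to2}\|\mtx{B}\|_F$ with the crude bound $\|\mtx{\Gamma}_{\vct{x}}\|_{2\to2}\le\|\vct{x}\|_1\le\sqrt{\|\vct{x}\|_0}\,\|\vct{x}\|_2$ and the Parseval-type identity $\|\mtx{\Gamma}_{\vct{y}}\|_F^2=\sum_q\|\mtx{A}_q\vct{y}\|_2^2=n\|\vct{y}\|_2^2$, which you correctly deduce from \eqref{equation:sumWqqIsId}. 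Both routes land on exactly the bound $n\,\|\vct{x}\|_0\,\|\vct{x}\|_2^2\,\|\vct{y}\|_2^2$. Yours is shorter and more conceptual, avoiding the explicit Fourier computation entirely; the paper's version makes visible that the whole loss sits in one Cauchy--Schwarz application, everything else being an exact identity. Your argument is a complete and valid substitute for the paper's proof.
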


\begin{proof}
For \eqref{eqn:AqPi}, observe that
\begin{align*}
  (\mtx{A}_q \vct{e}_{(k_0,\ell_0)})_k &=\sum_\ell \delta(k-q-k_0, \ell-\ell_0)\omega^{\ell k}
  =  \delta(q-(k-k_0))\omega^{\ell_0 k}\\
  &= (\mtx{\pi}(k_0,\ell_0)\vct{e}_q)_k\,.
\end{align*}
To see \eqref{equation:sumWqqIsId}, choose $\vct{z}\in \C^{n{\times}n}$ and
compute
 \begin{align*}
    \big(\mtx{A}_{q'}^\ast \mtx{A}_q \vct{z} \big)_{(k',\ell')}&=\sum_{\ell} z_{(k'+q'-q,\ell)} \omega^{\ell (k'+q')}\omega^{-\ell' (k'+q')}\\
    &=\sum_{\ell} z_{(k'+q'-q,\ell)} \omega^{(\ell-\ell') (k'+q')}\,.
   \end{align*}
 Hence,
 \begin{align*}
 \sum_q \big(\mtx{A}_{q}^\ast \mtx{A}_q \vct{z} \big)_{(k',\ell')}
        &=\sum_q\sum_{\ell} z_{(k',\ell)} \omega^{(\ell-\ell') (k'+q)} 
  =\sum_{\ell} z_{(k',\ell)} \sum_q \omega^{(\ell-\ell') (k'+q)}\\
 & =\sum_{\ell} z_{(k',\ell)} n\, \delta(\ell-\ell')
        =n\, z_{(k',\ell')}\,.
 \end{align*}
 Finally, observe that all but one column of $\mtx{A}_q \mtx{P}_{\{(\ell_0,k_0)\}}$ are 0, the nonzero column being
 column $(\ell_0,k_0)$, and only its $(k_0+q)$th entry is nonzero, namely, it is $\omega^{\ell_0(k_0+q)}$. We have
 $$\mtx{A}_q \mtx{P}_{\{(\ell_0,k_0)\}} \mtx{A}_q^\ast =\mtx{A}_q \mtx{P}_{\{(\ell_0,k_0)\}}\mtx{P}_{\{(\ell_0,k_0)\}} \mtx{A}_q^\ast =\mtx{A}_q \mtx{P}_{\{(\ell_0,k_0)\}} (\mtx{A}_q
 \mtx{P}_{\{(\ell_0,k_0)\}})^\ast,$$ and hence, $\mtx{A}_q \mtx{P}_{\{(\ell_0,k_0)\}} \mtx{A}_q^\ast=\mtx{P}_{\{k_0+q\}}$ and $\sum_q \mtx{A}_q \mtx{P}_{\{(\ell_0,k_0)\}} \mtx{A}_q^\ast=\mtx{I}.$

Let $\vct{x}\in\C^{n\times n}$ and $\Lambda=\supp \vct{x}$, then
 \begin{align}
  & \sum_q \sum_{q'} \big| \vct{x}^\ast \mtx{A}_{q'}^\ast \mtx{A}_q \vct{y}  \big|^2 =
            \sum_q \sum_{q'}\big| \sum_{(k',\ell')\in \Lambda} x_{(k',\ell')}\overline{\big(  \mtx{A}_{q'}^\ast \mtx{A}_q \vct{y}\big)}_{k',\ell'}  \big|^2
\notag \\
&\leq  \|\vct{x}\|_2^2  \
    \sum_q \sum_{q'} \sum_{(k',\ell')\in \Lambda} \big|
       \big(  \mtx{A}_{q'}^\ast \mtx{A}_q \vct{y}\big)_{k',\ell'}  \big|^2
       \notag \\
&=  \|\vct{x}\|_2^2 \
    \sum_q \sum_{q'} \sum_{(k',\ell')\in \Lambda} \big|
        \omega^{-\ell'(k'+q')} \sum_{\ell}\omega^{\ell (k'+q')}y_{(k'-(q-q'),\ell)}  \big|^2
              \notag \\
&=  \|\vct{x}\|_2^2 \
    \sum_q \sum_{q'} \sum_{(k',\ell')\in \Lambda} \big|
       \sum_{\ell}\omega^{\ell (k'+q')}y_{(k'-(q-q'),\ell)} \big|^2
       \notag\\
      & = n\,  \|\vct{x}\|_2^2
    \sum_{(k',\ell')\in \Lambda} \sum_q \sum_{q'} \big|
     \big(\mtx{{\mathcal F}}_2 \vct{y}\big)_{(k'-(q-q'),k'+q')}  \big|^2
       \notag\\
       &= n\,  \|\vct{x}\|_2^2
    \sum_{(k',\ell')\in \Lambda} \big\|
     \mtx{{\mathcal F}}_2 \vct{y}  \big\|^2_2
     \notag
       = n\, |\Lambda|  \, \|\vct{x}\|_2^2\,
    \| \vct{y}  \|_2^2 = n\,\|\vct{x}\|_0 \|\vct{x}\|_2^2 \|\vct{y}\|_2^2\notag
 \end{align}
by unitarity of $\mtx{{\mathcal F}}_2$. 
\end{proof}

\subsection{Proof of Lemma \ref{lem:d2:small}}
For  $\vct{x},\vct{y} \in \C^{n^2}$,
\begin{align}
d_2(\vct{x},\vct{y}) \,\leq\, \Big(\sum_{q'\neq q} \Big|\vct{x}^*\mtx{A}^\ast_{q'} \mtx{A}_q (\vct{x}-\vct{y})\Big|^2\Big)^{1/2}
+ \Big(\sum_{q'\neq q}\Big| (\vct{x}-\vct{y})^*\mtx{A}^\ast_{q'} \mtx{A}_q \vct{y}\Big|^2\Big)^{1/2}.\notag
\end{align}
Inequality \eqref{equation:boundUsingSparsity} implies that for $\vct{x},\vct{y} \in T_\sparsity$,
\begin{align}
\Big(\sum_{q'\neq q} \Big|\vct{x}^*\mtx{A}^\ast_{q'} \mtx{A}_q(\vct{x}-\vct{y})\Big|^2\Big)^{1/2}\Big(\sum_{q'\neq q}\Big| (\vct{x}-\vct{y})^*\mtx{A}^\ast_{q'} \mtx{A}_q \vct{y}\Big|^2\Big)^{1/2}
\leq \sqrt{\sparsity n} \, \|\vct{x}-\vct{y}\|_2 \notag
\end{align}
and, hence,
\begin{align}
d_2(\vct{x},\vct{y}) \,\leq\, 2\sqrt{\sparsity n}\, \|\vct{x}-\vct{y}\|_2.\label{d1:smallu}
\end{align}
Using the volumetric argument, see, for example,
\cite[Proposition 10.1]{ra10}, we obtain
\[
N(T_\sparsity,\|\cdot\|_2,u) \leq \Big(\begin{matrix} n^2 \\ \sparsity \end{matrix} \Big) (1+2/u)^\sparsity
\leq (en^2/\sparsity)^\sparsity(1+2/u)^\sparsity.
\]
By a rescaling argument
\begin{align*}
N(T_\sparsity,d_2,u) & \leq N(T_\sparsity, 2 \sqrt{\sparsity n} \|\cdot\|_2,u)
= N(T_\sparsity, \|\cdot\|_2, u/(2\sqrt{\sparsity n})) \\
&\leq (en^2/\sparsity)^\sparsity(1+4 \sqrt{\sparsity n}u^{-1})^\sparsity.
\end{align*}
Taking the logarithm completes the proof. 

\subsection{Proof of Lemma \ref{lem:d2:large}}
\label{subsection:d1largeU}

Now, we seek a suitable estimate of the covering numbers
$N(T_\sparsity,d_1,u)$ for
$u \geq \sqrt{n}$.
Observe that by \eqref{d1:smallu} the diameter of $T_\sparsity$ with respect
to $d_1$ is at most $4\sqrt{\sparsity n}$. Hence, it suffices to consider
$N(T_\sparsity,d_1,u)$ for
\begin{equation}\label{d1:u:range}
\sqrt{n} \leq u \leq 4\sqrt{\sparsity n},
\end{equation}
as stated in the lemma.
We use the empirical method \cite{ca85}, similarly as in \cite{ruve08}. We define the
norm $\|\cdot\|_\ast$ on $\C^{n{\times}n}$ by
\begin{eqnarray}
  \label{eqn:starnorm}\|\vct{x}\|_\ast= \sum_\lambda |\Re\, x_\lambda | +  |\Im \,
x_\lambda |\,.
\end{eqnarray}
For $\vct{x} \in T_\sparsity$ we define a random vector $\vct{Z}$, which takes 
$\|\vct{x}\|_\ast \sgn(\Re x_\lambda) \vct{e}_\lambda$ with probability $\frac{|\Re
\vct{x}_\lambda|}{\|\vct{x}\|_\ast}$, and the value  $i \|\vct{x}\|_\ast  \sgn(\Im x_\lambda)
\vct{e}_\lambda$ with probability $\frac{|\Im x_\lambda|}{\|\vct{x}\|_\ast}$.

Now, let $\vct{Z}_1,\ldots,\vct{Z}_m,\vct{Z}_1',\ldots,\vct{Z}_m'$ be independent copies of
$\vct{Z}$. We set $\vct{y}=\frac 1 m \sum_{j=1}^m \vct{Z}_j$ and $\vct{y}'=\frac 1 m \sum_{j=1}^m \vct{Z}_j'$ and
attempt to approximate $\mtx{B}(\vct{x})$ by
\begin{equation}\label{def:B}
\mtx{B} := \mtx{B}(\vct{y},\vct{y}')= \frac{1}{m^2} \sum_{j,j'=1}^m \mtx{B}(\vct{Z}_j,\vct{Z}_{j'}')\,.
\end{equation}
First, compute
\begin{align}
&\E \|\mtx{B} - \mtx{B}(\vct{x}) \|_{F}^2 = \E  \sum_{q,q'}\big|\vct{x}^* \mtx{W}_{q',q} \vct{x} - \frac{1}{m^2}\sum_{j,j'=1}^m \vct{Z}_j^* \mtx{W}_{q',q} \vct{Z}_{j'}'\big|^2\notag\\
& = \sum_{q,q'}\Big( |\vct{x}^* \mtx{W}_{q',q} \vct{x}|^2 - 2 \Re\Big(\overline{\vct{x}^* \mtx{W}_{q',q} \vct{x}} \, \E \Big[\frac{1}{m^2} \sum_{j,j'=1}^m \vct{Z}_j^* \mtx{W}_{q,q'} \vct{Z}_{j'}'\Big]\Big)\notag\\
  & \;\;\; + \E \Big[\Big|\frac{1}{m^2} \sum_{j,j'=1}^m \vct{Z}_j^* \mtx{W}_{q,q'} \vct{Z}_{j'}'\Big|^2\Big]\Big)\notag\\
&=   \sum_{q,q'}\Big(- |\vct{x}^* \mtx{W}_{q',q} \vct{x}|^2
     + \frac{1}{m^4} \sum_{j,j',j'',j'''=1}^m \E\Big[\vct{Z}_j^* \mtx{W}_{q,q'} \vct{Z}_{j'}' (\vct{Z}_{j''}')^* \mtx{W}_{q,q'}^* \vct{Z}_{j'''}\Big]\Big)\,,\notag
\end{align}
where we used that $\E[ \vct{Z}_j^* \mtx{W}_{q,q'} \vct{Z}_{j'}'] = \vct{x}^* \mtx{W}_{q,q'} \vct{x}$,
$j,j'=1,\ldots m$, by independence.
Moreover, for $j \neq j'''$ and $j'\neq j''$, independence implies
\[
 \E\Big[\vct{Z}_j^* \mtx{W}_{q,q'} \vct{Z}_{j'}' (\vct{Z}_{j''}')^* \mtx{W}_{q,q'}^* \vct{Z}_{j'''}\Big] =
  |\vct{x}^*\mtx{W}_{q,q'} \vct{x}|^2.
\]
To estimate  summands with  $ j'= j''$, note that
\[
\vct{Z}_{j}^* \mtx{W}_{q',q} \vct{Z}_{ j'}'(\vct{Z}_{j'}')^\ast \mtx{W}_{q,q'} \vct{Z}_{j'''}
= \|\vct{x}\|_\ast^2 \mtx{Z}_{ j}^* \mtx{A}^\ast_{q'} \mtx{A}_q \mtx{P}_{\{\lambda\}} \mtx{A}^\ast_{q}\mtx{A}_{q'} \vct{Z}_{j'''},
\]
where $\{\lambda\}=\supp \vct{Z}_{j'}$ is random.
Hence, in this case, we compute using \eqref{equation:sumWqqIsId2} in
Lemma~\ref{lemma:Gabor}
\begin{align}
&\sum_{q'\neq q} \E\Big[\vct{Z}_{ j}^* \mtx{A}^\ast_{q'} \mtx{A}_q \vct{Z}_{ j'}'(\vct{Z}_{ j'}')^\ast \mtx{A}^\ast_{q}\mtx{A}_{q'}  \vct{Z}_{j'''} \Big]\notag\\
& \leq  \|\vct{x}\|_\ast^2 \sum_{q', q} \E\Big[ \vct{Z}_{ j}^*\mtx{A}^\ast_{q'} \mtx{A}_q \mtx{P}_{\{\lambda\}} \mtx{A}^*_{q} \mtx{A}_{q'} \vct{Z}_{ j'''}\Big]\notag\\
&=  \|\vct{x}\|_\ast^2 \E \Big[ \vct{Z}_{ j}^*  \sum_{q'}
    \Big(\mtx{A}^\ast_{q'}\Big( \sum_q \mtx{A}_q \mtx{P}_{\{\lambda\}} \mtx{A}^*_{q} \Big) \mtx{A}_{q'}\Big) \vct{Z}_{ j'''}\Big]\notag\\
&=  \|\vct{x}\|_\ast^2 \E \Big[ \vct{Z}_{ j}^*  \sum_{q'}\Big(\mtx{A}^\ast_{q'} \mtx{A}_{q'}\Big) \vct{Z}_{ j'''}\Big]
= n \|\vct{x}\|_\ast^2 \E[\vct{Z}_{ j}^* \vct{Z}_{ j'''}]\notag\\
&=
\Big\{
  \begin{array}{ll}
    n \|\vct{x}\|_\ast^4, & \hbox{ if } j=j''', \\
    n \|\vct{x}\|_\ast^2 \E[\vct{Z}_{ j}^* ]\E[\vct{Z}_{ j'''}]=  n \|\vct{x}\|_\ast^2\|\vct{x}\|_2^2 \leq n \|\vct{x}\|_\ast^2, & \hbox{ else.}
  \end{array}
\Big.
 \notag
\end{align}
Symmetry implies an identical estimate for $j=j'''$, $j'\neq j''$.
As $\vct{x} \in T_\sparsity$ is $\sparsity$-sparse we have $\|\vct{x}\|_\ast \leq
\sqrt{2}\|\vct{x}\|_1\leq   \sqrt{2 \sparsity} \|\vct{x}\|_2 \leq \sqrt{2\sparsity}$. We
conclude
\begin{align}
  & \sum_{q',q}  \sum_{j,j',j'',j'''=1}^m \E\Big[\vct{Z}_j^* \mtx{W}_{q,q'} \vct{Z}_{j'}' (\vct{Z}_{j''}')^*
\mtx{W}_{q,q'}^* \vct{Z}_{j'''}\Big]\notag \\
&\leq m^2(m-1)^2  \sum_{q',q} |\vct{x}^\ast \mtx{W}_{q,q'} \vct{x}|^2 + m^2 n 4 s^2 + 2 m^2 (m-1) n \cdot 2s. \notag
\end{align}
For $m\geq \frac {11 n s^{\frac 3 2 }}{u^2}$ and $u\leq 4\sqrt{sn}$, we
finally obtain,
\begin{align}
&\E \|\mtx{B} - \mtx{B}(\vct{x}) \|_{F}^2  \leq  \sum_{q',q}
- |\vct{x}^* \mtx{W}_{q',q} \vct{x}|^2  + \frac{m^2(m^2-1)}{m^4}\sum_{q',q}  |\vct{x}^\ast \mtx{W}_{q,q'} \vct{x}|^2 \notag\\
&\;\;\;+   \frac {m^2 n 4 s^2} {m^4 } +  \frac {4m^2 (m-1) n s}{m^4}\notag  \\
&\leq   \frac {4  n s^2} {m^2 }+  \frac {4 n s}{m}
\leq   \frac {4  n s^2} {121 n^2s^3 } u^4+  \frac {4 n s}{11 n s^{\frac 3 2 } }u^2 \leq \frac { 64 n s} {121 n s } u^2+  \frac {44}{121 \sqrt{s} }u^2 \leq u^2\,.\label{Bx}
\end{align}

Since $\|\vct{x}\|_\ast$ can take any value in $[1, \sqrt{2\sparsity}]$,  we still have to
discretize this factor in the definition of the random variable $\vct{Z}$.
To this end, set
\[
\mtx{B}_\alpha := \frac{1}{m^2} \sum_{j=1,j'=1}^m \mtx{B}(\alpha \sgn(x_{\lambda_j}) \vct{e}_{\lambda_j}, \alpha
 \sgn(x_{\lambda'_{j'}}) \vct{e}_{\lambda'_{j'}})\,.
\]
Next, we observe that, for $\lambda=(k,\ell)$ and $\lambda'=(k',\ell')$,
\begin{eqnarray}
\mtx{B}(\vct{e}_{\lambda'}, \vct{e}_{\lambda})_{q',q} &= (\mtx{A}_{q'}e_{\lambda'})^\ast \mtx{A}_q
\vct{e}_{\lambda}
    =\langle \mtx{\pi}(\lambda)\vct{e}_q, \mtx{\pi}(\lambda') \vct{e}_{q'} \rangle \notag\\
    &= \Big\{
                                                               \begin{array}{ll}
                                                                \omega^{(\ell-\ell')(k+q)}, & \hbox{ if } k'+q'=k+q \,;\\
                                                                 0, & \hbox{ else,}
                                                               \end{array}
                                                             \Big. 
\label{eqn:B-elambda-elambdaprime}
\end{eqnarray}
and, hence,
$\|\mtx{B}(\vct{e}_{\lambda'}, \vct{e}_{\lambda})\|_F^2=n$.
Now, assume $\alpha$ is chosen such that $|\|\vct{x}\|_\ast^2 - \alpha^2| \leq \frac u
{\sqrt n}$. Then
\begin{align}
&\|\mtx{B}_\alpha - \mtx{B}_{\|\vct{x}\|_\ast} \|_F \notag\\
& = \Big\| \frac{1}{m^2} \sum_{j=1,j'=1}^m \mtx{B}(\alpha \sgn(x_{\lambda_j}) \vct{e}_{\lambda_j}, \alpha \sgn(x_{\lambda'_{j'}}) \vct{e}_{\lambda'_{j'}})
\notag\\
& \phantom{ \Big\| \frac{1}{m^2}} - \frac{1}{m^2} \sum_{j,j'=1}^m
\mtx{B}(\|\vct{x}\|_\ast \sgn(x_{\lambda_j}) \vct{e}_{\lambda_j}, \|\vct{x}\|_\ast \sgn(x_{\lambda'_{j'}}) \vct{e}_{\lambda'_{j'}})\Big\|_{F}\notag\\
&= |\|\vct{x}\|_\ast^2 - \alpha^2| \| \frac{1}{m^2} \sum_{j,j'=1}^m \mtx{B}(\sgn(x_{\lambda_j}) \vct{e}_{\lambda_j},\sgn(x_{\lambda'_{j'}})
\vct{e}_{k'_{j'}}) \|_{F}\notag\\
&  \leq \frac{u}{m^2\sqrt n} \sum_{j,j'=1}^{m} \|\mtx{B}(\vct{e}_{\lambda_j},\vct{e}_{\lambda_{j'}})\|_{F}\notag\\
& = u.
\label{Balpha}
\end{align}
We conclude that it suffices to choose $$K:=\Big\lceil\frac {2s -1} {\frac u
{\sqrt n} }\Big\rceil\leq \lceil 2s\sqrt n / u \rceil $$ values $\alpha_k \in J_s :=
[1,2s]$, $k=1,\ldots,K$, such that for each $\beta \in J_s$ there exists $k$
satisfying $| \beta -\alpha_k| \leq u / \sqrt{n}$.

Now, given $\vct{x}$ we can find $\vct{z}_1,\ldots,\vct{z}_m,\vct{z}_1',\ldots,\vct{z}_m'$ of the form
$\|\vct{x}\|_\ast p_\lambda \vct{e}_\lambda$, $p_\lambda \in\{1,-1,i,-i\}$ such that
$\|\mtx{B} -\mtx{B}(\vct{x})\|_F \leq u$. Further, we can find $k$ such that $|\|\vct{x}\|_\ast^2- \alpha_k^2|
\leq u/\sqrt{n}$. We replace the $\vct{z}_1,\ldots,\vct{z}_m,\vct{z}_1',\ldots,\vct{z}_m'$ by the
respective $\tilde{\vct{z}}_1,\ldots,\tilde{\vct{z}}_m,\tilde{\vct{z}}_1',\ldots,\tilde{\vct{z}}_m'$ of
the form $\alpha_j p_\lambda \vct{e}_\lambda$.

Then, using \eqref{Bx}, \eqref{Balpha} and the triangle inequality, we obtain
\[
\|\mtx{B}(\vct{x}) - \frac{1}{m^2} \sum_{j,j'=1}^m \mtx{B}(\tilde{\vct{z}}_j,\tilde{\vct{z}}'_{j'}) \|_{F} \leq 2u.
\]
Now, each $\tilde{\vct{z}_j}$, $\tilde{\vct{z}}_j'$ can take at most $\lceil 2s\sqrt n / u \rceil \cdot
4\cdot n^2$ values, so that $$\frac{1}{m^2} \sum_{j,j'=1}^m
\mtx{B}(\tilde{\vct{z}}_j,\tilde{\vct{z}}'_{j'})$$ can take at most $(4\lceil \frac{2s\sqrt{n}}{u} \rceil
n^2)^{2m} \leq (C s n^{\frac 5 2}/u)^{2m}$ values. Hence, we found a
$2u$-covering of the set of matrices $\mtx{B}(\vct{x})$ with $\vct{x} \in T_\sparsity$ of
cardinality at most $(C  s n^{\frac 5 2} /u)^{2m}$. Unfortunately, the matrices
of the covering are not necessarily of the form $\mtx{B}(\vct{x})$. Nevertheless, we
may replace each relevant matrix. (Clearly, if for a matrix
$\frac{1}{m^2} \sum_{j,j'=1}^m \mtx{B}(\tilde{\vct{z}}_j,\tilde{\vct{z}}_{j'}')$ there is no such
$\tilde{\vct{x}}$, then we can discard that matrix.) $\frac{1}{m^2} \sum_{j,j'=1}^m
\mtx{B}(\tilde{\vct{z}}_j,\tilde{\vct{z}}_{j'}')$ by a matrix $\mtx{B}(\tilde{\vct{x}})$ with
\[
\|\mtx{B}(\tilde{\vct{x}}) - \frac{1}{m^2} \sum_{j,j'=1}^m \mtx{B}(\tilde{\vct{z}}_j,\tilde{\vct{z}}'_{j'}) \|_{F} \leq 2u.
\]
Again, the set of such chosen $\tilde{\vct{x}}$ has cardinality at most $(C  s
n^{\frac 5 2} /u)^{2m}$ and, by the triangle inequality, for each $\vct{x}$ we can
find $\tilde{\vct{x}}$ of the covering such that
\[
d_2(\vct{x},\tilde{\vct{x}}) \leq 4u.
\]
For $m\geq 11 u^{-2} n s^{\frac 3 2}$, we consequently get
\[
\log(N(T_\sparsity,d_2,4u)) \leq
\log((C  s n^{\frac 5 2} /u)^{2m}) =
2 m \log( C n s^{5/2}/ u).
\]
The choice $m=\lceil 11 u^{-2} n s^{\frac 3 2}\rceil \leq 27 u^{-2} n s^{\frac 3
2}$ 
and rescaling gives
\[
\log(N(T_\sparsity,d_2,u))  \leq 27  u^{-2} n s^{\frac 3 2}  \log( 4C n s^{5/2}/ u) \leq c  u^{-2} n s^{\frac 3 2} \log( n s^{5/2}/ u).
\]
The proof of Lemma \ref{lem:d2:large} is completed. 

\subsection{Proof of Lemma \ref{lem:d1}, Part I}

Now we show the estimate
\[
\log(N(T_\sparsity,d_1,u)) \leq \sparsity \log(en^2/s) + \sparsity \log(1+4\sparsity u^{-1}),
\]
which will establish one part of \eqref{eqn:d1:estimate}. Before doing so, we note that one can quickly obtain an estimate for
$N(T_\sparsity,d_1,u)$ for small $u$ using that the Frobenius norm dominates the operator norm, and, hence $d_1(\vct{x},\vct{y}) \leq d_2(\vct{x},\vct{y}) \leq 2\sqrt{sn} \|\vct{x}-\vct{y}\|_2$. In fact, this estimate would not deteriorate the estimate in Theorem~\ref{theorem:main}(a).
But in the proof of Theorem~\ref{theorem:main}(b),
the more involved estimate $d_1(\vct{x},\vct{y})  \leq 2s \|\vct{x}-\vct{y}\|_2$ developed below is useful.

Let us first rewrite $d_1$. Recall \eqref{eqn:AqPi} in
Lemma~\ref{lemma:Gabor}, namely,
 $\mtx{A}_q \vct{e}_\lambda=\mtx{\pi}(\lambda) \vct{e}_q$, and, with $\lambda=(k,\ell)$ and $\lambda'=(k',\ell')$,
we obtain
$$\mtx{\pi}(\lambda')^\ast
\mtx{\pi}(\lambda)=\omega^{k'(\ell -\ell'')} \mtx{\pi}(\lambda-\lambda')\equiv
\omega(\lambda,\lambda')\mtx{\pi}(\lambda-\lambda').$$

Writing now  $\vct{x} = \sum_{\lambda \in \Z_n\times \Z_n} x_\lambda
\vct{e}_\lambda$, the entries of the matrix $\mtx{B}(\vct{x})$ in (\ref{def_Bx}) for $q'\neq q$
are given by
\begin{align}
& \mtx{B}(\vct{x})_{q'q}  = \sum_{\lambda,\lambda'} x_\lambda \overline{x}_{\lambda'} \vct{e}_{\lambda'}^*\mtx{A}^\ast_{q'} \mtx{A}_q \vct{e}_{\lambda}
= \sum_{\lambda,\lambda'} x_\lambda \overline{x}_{\lambda'} \vct{e}_{q'}^*\mtx{\pi}(\lambda')^* \mtx{\pi}(\lambda) \vct{e}_q\notag\\
&= \sum_{\lambda,\lambda'} x_\lambda \overline{x}_{\lambda'} \omega(\lambda,\lambda')\ \vct{e}_{q'}^*\mtx{\pi}(\lambda-\lambda')\vct{e}_q =
\sum_{\lambda \neq \lambda'} x_\lambda \overline{x}_{\lambda'} \omega(\lambda,\lambda')\ \vct{e}_{q'}^*\pi(\lambda-\lambda')\vct{e}_q\notag\\
& = \vct{e}_{q'}^* \Big(\sum_{\lambda\neq \lambda'} x_\lambda \overline{x}_{\lambda'} \omega(\lambda,\lambda')\ \mtx{\pi}(\lambda-\lambda')\Big) \vct{e}_q.
\notag
\end{align}
We used for the fourth inequality that  $\vct{e}_{q'}^*\mtx{\pi}(\ell_0,k_0)\vct{e}_q = 0$ if $q' \neq q$ and $k_0=0$.
This shows that
\[
\mtx{B}(\vct{x}) = \sum_{\lambda \neq \lambda'} x_\lambda \overline{x}_{\lambda'} \omega(\lambda,\lambda')\ \mtx{\pi}(\lambda-\lambda').
\]
The estimate \eqref{eqn:Schatten:ineq} for the Schatten norms shows
\begin{align}
d_1^{2p}(\vct{x},\vct{y}) & = \| \sum_{\lambda \neq \lambda'} (x_\lambda \overline{x}_{\lambda'}- y_\lambda \overline{y}_{\lambda'}) \omega(\lambda,\lambda')\ \mtx{\pi}(\lambda-\lambda') \|_{2 \to 2}^{2p}\notag\\
&\leq \|\sum_{\lambda \neq \lambda'} (x_\lambda \overline{x}_{\lambda'} - y_\lambda \overline{y}_{\lambda'}) \omega(\lambda,\lambda')\ \mtx{\pi}(\lambda-\lambda') \|_{S_{2p}}^{2p}
\notag\\
&= \sum_{\lambda_1\neq \lambda_1', \lambda_2 \neq \lambda_2', \ldots, \lambda_{2p} \neq \lambda_{2p}'}
(x_{\lambda_1}  \overline{x}_{\lambda_1'} - y_{\lambda_1}  \overline{y}_{\lambda_1'}) \cdots (x_{\lambda_{2p}}  \overline{x}_{\lambda_{2p}'} - y_{\lambda_{2p}}  \overline{y}_{\lambda_{2p}'}) \times \notag\\
& \phantom{=} \times \omega(\lambda_1,\lambda_1') \cdots \omega(\lambda_{2p},\lambda_{2p}')\
  \Tr\Big(\mtx{\pi}(\lambda_1-\lambda_1')\cdots \mtx{\pi}(\lambda_{2p}-\lambda_{2p}') \Big).\notag
\end{align}
Setting $(\ell_0,k_0)=\lambda_1 - \lambda_1'+ \lambda_2 - \lambda_2' +
\cdots + \lambda_{2p}-\lambda_{2p}'$ we observe that the trace in the last
expression sums over zero entries if $k_0\neq 0$ and sums over  roots of
unity to zero if $\ell_0\neq 0$. We conclude that
\[
\Big|\Tr\Big(\mtx{\pi}(\lambda_1-\lambda_1')\cdots \mtx{\pi}(\lambda_{2p}-\lambda_{2p}')\Big)\Big|
\leq n\,  \delta_{0,\lambda_1 - \lambda_1'+ \lambda_2 - \lambda_2' + \cdots + \lambda_{2p}-\lambda_{2p}'}
\,. \]
Hence,
\begin{align}
& d_1(\vct{x},\vct{y})^{2p} \leq n \sum_{\lambda_1 \neq \lambda_1'}
\big| x_{\lambda_1} \overline{x}_{\lambda_1'} - y_{\lambda_1} \overline{y}_{\lambda_1'}\big|
\ \sum_{\lambda_2 \neq \lambda_2'}
\big| x_{\lambda_2} \overline{x}_{\lambda_2'} - y_{\lambda_2}y_{\lambda_2'}\big|  \cdots \notag\\
& \cdots \sum_{\lambda_{2p-1}\neq \lambda_{2p-1}'}
 \big| x_{\lambda_{2p-1}} \overline{x}_{\lambda_{2p-1}'} - y_{\lambda_{2p-1}}\overline{y}_{\lambda_{2p-1}'}\big|
\  \sum_{\lambda_{2p}} \big| x_{\lambda_{2p}} \overline{x}_{\lambda_1- \lambda_1' + \cdots
    + \lambda_{2p}} - y_{\lambda_{2p}} \overline{y}_{\lambda_1 - \lambda_1' + \cdots + \lambda_{4p}}\big| . \notag
\end{align}
Now observe that, setting $t = \lambda_1 - \lambda_1'+ \cdots +
\lambda_{2p-1}- \lambda_{2p-1}'$, and using the Cauchy-Schwarz
inequality
\begin{align}
&\sum_{\lambda} |x_{\lambda} \overline{x}_{t + \lambda} - y_{\lambda} \overline{y}_{t + \lambda} |
\leq \sum_{\lambda} |x_{\lambda}| |x_{t+\lambda} - y_{t+\lambda}| + \sum_{\lambda}
|x_{\lambda} - y_{\lambda}| |y_{\lambda + t}|\notag\\
& \leq \|\vct{x}\|_2 \|\vct{x}-\vct{y}\|_2 + \|\vct{x}-\vct{y}\|_2 \|\vct{y}\|_2 = (\|\vct{x}\|_2 + \|\vct{y}\|_2) \|\vct{x}-\vct{y}\|_2.\notag
\end{align}
We obtain similarly
\begin{eqnarray}
\sum_{\lambda, \lambda'} | x_{\lambda} \overline{x}_{\lambda'} - y_{\lambda}
\overline{y}_{\lambda'}|&=& \sum_{\lambda, \lambda'} |x_{\lambda}| \, | x_{\lambda'} -
y_{\lambda'}| + | y_{\lambda'}| \, | x_\lambda-y_\lambda |
\leq (\|\vct{x}\|_1 + \|\vct{y}\|_1) \|\vct{x}-\vct{y}\|_1. \notag
\end{eqnarray}
For $\vct{x},\vct{y}$ with $\supp \vct{x} = \supp \vct{y} =\Lambda$ for $|\Lambda| \leq s$ and
$\|\vct{x}\|_2 = \|\vct{y}\|_2 = 1$ we have $\|\vct{x}\|_1 \leq \sqrt{\sparsity} \|\vct{x}\|_2 =
\sqrt{\sparsity}$ (and similarly for $\vct{y}$) as well as $\|\vct{x}-\vct{y}\|_1 \leq
\sqrt{\sparsity} \|\vct{x}-\vct{y}\|_2$. Hence, $$(\|\vct{x}\|_1 + \|\vct{y}\|_1) \|\vct{x}-\vct{y}\|_1 \leq 2\sparsity
\|\vct{x}-\vct{y}\|_2.$$
This finally yields
\[
d_1(\vct{x},\vct{y})^{2p} \leq 2^{2p} ns^{2p-1} \|\vct{x}-\vct{y}\|_2^{2p}
\]
for such $\vct{x},\vct{y}$. As this holds for all $p \in \N$ we conclude that
\begin{align}\label{d2:smallu}
d_1(\vct{x},\vct{y}) \leq 2\sparsity \|\vct{x}-\vct{y}\|_2.
\end{align}
With the volumetric argument, see for example\ \cite[Proposition 10.1]{ra10}, we obtain the bound
\[
\log(N(T_\sparsity, \|\cdot\|_2, u)) \leq \sparsity \log(en^2/\sparsity) +  \sparsity\log(1+2/u).
\]
Rescaling yields
\begin{align}
\log(N(T_\sparsity, d_1, u)) & \leq \log(N(T_\sparsity,2\sparsity \|\cdot\|_2,u)) =
\log(N(T_\sparsity,\|\cdot\|_2,u/(2\sparsity)))\notag\\
& \leq \sparsity \log(en^2/\sparsity) +  \sparsity\log(1+4\sparsity u^{-1}),\notag
\end{align}
which is the claimed inequality. 

\subsection{Proof of Lemma \ref{lem:d1}, Part II}

Next we establish the remaining estimate of \eqref{eqn:d1:estimate},
\[
\log(N(T_\sparsity,d_1,u)) \leq c u^{-2} \sparsity^2 \log(2n) \log(n^2/u).
\]
To this end, we use again the empirical method as in
Section~\ref{subsection:d1largeU}.

For $\vct{x} \in T_\sparsity$, we define $\vct{Z}_1,\ldots, \vct{Z}_m$ and $\vct{Z}_1',\ldots,
\vct{Z}_m'$ as in Section~\ref{subsection:d1largeU}, that is, each takes
independently the value $\|\vct{x}\|_\ast \sgn(\Re x_\lambda) \vct{e}_\lambda$ with
probability $\frac{|\Re x_\lambda|}{\|\vct{x}\|_\ast}$, and the value
$i \|\vct{x}\|_\ast \sgn(\Im x_\lambda) \vct{e}_\lambda$ with probability $\frac{|\Im
x_\lambda|}{\|\vct{x}\|_\ast}$.

As before, we set
\begin{equation}\label{def:BZZ}
B(\vct{Z},\vct{Z}') =  (\vct{Z}^* \vct{W}_{q'q} \vct{Z}')_{q',q},
\end{equation}
where $\mtx{A}^\ast_{q'} \mtx{A}_q =\mtx{A}^\ast_{q'} \mtx{A}_q$ for $q' \neq q$ and $\mtx{W}_{q,q} =
0$, $j = 1,\ldots,N$, and attempt to approximate $\mtx{B}(\vct{x})$ with
\begin{equation}\label{def:B2}
\mtx{B} := \frac{1}{m} \sum_{j=1}^m \mtx{B}(\vct{Z}_j,\vct{Z}_j').
\end{equation}
That is, we will estimate $\E \|\mtx{B} - \mtx{B}(\vct{x})\|_{2 \to 2}^2$.

We will use symmetrization as formulated in the following
lemma \cite[Lemma 6.7]{ra10}, see also \cite[Lemma 6.3]{leta91}, \cite[Lemma 1.2.6]{gide99}.
Note that we will use this result with
$\mtx{B}_j=\mtx{B}(\vct{Z}_j,\vct{Z}_j')$.
\begin{lemma}\label{lem:symmetrize} (Symmetrization) Assume that $(\vct{Y}_j)_{j=1}^{m}$ is a
sequence of independent random vectors in $\C^r$ equipped with a
(semi-)norm $\|\cdot\|$, having expectations $\beta_j = \E \vct{Y}_j$. Then for
$1\leq p < \infty$
\begin{equation}\label{symmetrize}
\Big(\E \|\sum_{j=1}^m (\vct{Y}_j - \beta_j) \|^p\Big)^{1/p} \leq 2 \Big(\E \|\sum_{j=1}^m \epsilon_j \vct{Y}_j \|^p \Big)^{1/p},
\end{equation}
where $(\epsilon_j)_{j=1}^N$ is a Rademacher series independent of $(\vct{Y}_j)_{j=1}^{m}$.
\end{lemma}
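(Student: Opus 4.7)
The plan is to prove this classical symmetrization bound by the standard two-step argument: first center the $\vct{Y}_j$ using an independent copy and Jensen's inequality, then introduce the Rademacher signs using the symmetry of the resulting difference. All steps are generic and use only the triangle inequality, linearity of expectation, and independence, so they work verbatim for a vector-valued random variable in $\C^r$ equipped with an arbitrary (semi-)norm $\|\cdot\|$.

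First, I would let $(\vct{Y}_j')_{j=1}^m$ be an independent copy of $(\vct{Y}_j)_{j=1}^m$, defined on a possibly enlarged probability space, and independent of the Rademacher sequence $(\epsilon_j)$. Denoting expectation with respect to the primed variables by $\E'$, we have $\E' \vct{Y}_j' = \beta_j$, so that
\begin{equation*}
\sum_{j=1}^m (\vct{Y}_j - \beta_j) \;=\; \E'\!\Big[\sum_{j=1}^m (\vct{Y}_j - \vct{Y}_j')\Big].
\end{equation*}
Applying the (semi-)norm, then Jensen's inequality for the convex map $t \mapsto t^p$ on $[0,\infty)$ and for the norm itself (which is convex), and finally taking total expectation yields
\begin{equation*}
\E \Big\|\sum_{j=1}^m (\vct{Y}_j - \beta_j)\Big\|^p
\;\leq\; \E \E' \Big\|\sum_{j=1}^m (\vct{Y}_j - \vct{Y}_j')\Big\|^p.
\end{equation*}

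Next, because $\vct{Y}_j$ and $\vct{Y}_j'$ are i.i.d., the difference $\vct{Y}_j - \vct{Y}_j'$ is symmetric, i.e., has the same distribution as $-(\vct{Y}_j - \vct{Y}_j')$. By independence across $j$, the joint law of $(\vct{Y}_j - \vct{Y}_j')_{j=1}^m$ coincides with that of $(\epsilon_j(\vct{Y}_j - \vct{Y}_j'))_{j=1}^m$ for any independent Rademacher sequence $(\epsilon_j)$. Hence
\begin{equation*}
\E \E' \Big\|\sum_{j=1}^m (\vct{Y}_j - \vct{Y}_j')\Big\|^p
\;=\; \E \E' \E_\epsilon \Big\|\sum_{j=1}^m \epsilon_j(\vct{Y}_j - \vct{Y}_j')\Big\|^p.
\end{equation*}

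Finally, I would apply the triangle inequality inside the norm and then the triangle inequality in $L^p$ (Minkowski) to split the two sums:
\begin{equation*}
\Big(\E \Big\|\sum_{j=1}^m \epsilon_j(\vct{Y}_j - \vct{Y}_j')\Big\|^p\Big)^{1/p}
\;\leq\; \Big(\E \Big\|\sum_{j=1}^m \epsilon_j \vct{Y}_j\Big\|^p\Big)^{1/p}
+ \Big(\E \Big\|\sum_{j=1}^m \epsilon_j \vct{Y}_j'\Big\|^p\Big)^{1/p}.
\end{equation*}
Since $(\vct{Y}_j')$ is a copy of $(\vct{Y}_j)$ and independent of $(\epsilon_j)$, the two $L^p$-norms on the right-hand side coincide, giving the factor of $2$ and completing the proof after taking $p$-th roots throughout. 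There is no real obstacle here; the only minor point to be careful about is that the norm is only assumed to be a semi-norm, but the triangle inequality and absolute homogeneity (which are all that the argument uses) are part of the semi-norm axioms, so measurability of $\|\sum \epsilon_j \vct{Y}_j\|$ is the only thing to verify, and it is immediate in finite dimensions.
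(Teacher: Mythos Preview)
Your proof is correct and is precisely the classical argument. The paper itself does not supply a proof of this lemma; it simply cites standard references (\cite{ra10}, \cite{leta91}, \cite{gide99}), in which exactly the independent-copy-plus-Rademacher-sign argument you wrote appears. So there is nothing to compare: you have filled in what the paper leaves to the literature.
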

To estimate the $2p$-th moment of $\| \mtx{B}(\vct{x}) - \mtx{B}\|_{2\to 2}$,  we will use the
noncommutative Khintchine inequality \cite{bu01,ra10} which makes use of
the Schatten $p$-norms introduced in \eqref{def:Sp}.
\begin{theorem}\label{thm:Khintchine:nonc} (Noncommutative Khintchine inequality) Let
$\vct{\epsilon}=(\epsilon_1,\ldots,\epsilon_m)$ be a Rademacher sequence, and
let $\mtx{A}_j$, $j=1,\ldots,m$, be complex matrices of the same dimension.
Choose $p \in \N$. Then
\begin{align}
&\E \| \sum_{j=1}^m \epsilon_j \mtx{A}_j \|_{S_{2p}}^{2p} \leq \frac{(2p)!}{2^p p!} \max \Big\{ \Big\|\Big(\sum_{j=1}^m \mtx{A}_j \mtx{A}_j^*\Big)^{1/2} \Big\|_{S_{2p}}^{2p}, \Big\|\Big(\sum_{j=1}^m \mtx{A}_j^* \mtx{A}_j\Big)^{1/2} \Big\|_{S_{2p}}^{2p} \Big\}.\label{ineq:Khintchine:nonc}
\end{align}
\end{theorem}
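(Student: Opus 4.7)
The plan is to establish the noncommutative Khintchine inequality by a moment-expansion argument in the spirit of Lust-Piquard and Pisier. The starting identity is $\|M\|_{S_{2p}}^{2p} = \Tr((M^*M)^p)$, which turns the $2p$-th Schatten moment into a trace polynomial in the matrix entries and makes the problem amenable to combinatorics.

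First I would reduce to the self-adjoint case. Given general rectangular matrices $\mtx{A}_j$, form the self-adjoint dilations $\tilde{\mtx{A}}_j = \bigl(\begin{smallmatrix} 0 & \mtx{A}_j \\ \mtx{A}_j^* & 0 \end{smallmatrix}\bigr)$, which satisfy $\tilde{\mtx{A}}_j^{\,2} = \diag(\mtx{A}_j \mtx{A}_j^*,\, \mtx{A}_j^* \mtx{A}_j)$. The Schatten norm of $\sum_j \epsilon_j \tilde{\mtx{A}}_j$ controls the norm of $\sum_j \epsilon_j \mtx{A}_j$, and the block-diagonal structure of $\sum_j \tilde{\mtx{A}}_j^{\,2}$ splits cleanly into the two terms appearing on the right-hand side of \eqref{ineq:Khintchine:nonc}, which explains the maximum. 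After this reduction it suffices to prove the one-sided estimate for self-adjoint summands.

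Second, for self-adjoint $\mtx{A}_j$ and $M = \sum_j \epsilon_j \mtx{A}_j$, I would expand
\[
\E\,\Tr(M^{2p}) \;=\; \sum_{j_1,\ldots,j_{2p}} \E[\epsilon_{j_1}\cdots \epsilon_{j_{2p}}]\, \Tr(\mtx{A}_{j_1}\cdots \mtx{A}_{j_{2p}}).
\]
The Rademacher expectation vanishes unless every index appears an even number of times. The combinatorial heart of the argument is Gaussian domination: $|\E[\epsilon_{j_1}\cdots \epsilon_{j_{2p}}]| \leq \E[g_{j_1}\cdots g_{j_{2p}}]$ for i.i.d.\ standard Gaussians $g_j$, and by Wick's formula the right-hand side is a sum over the $(2p)!/(2^p p!)$ perfect pair matchings of $\{1,\ldots,2p\}$. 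For each matching, cyclicity of the trace together with the noncommutative H\"older inequality bounds the associated term by $\Tr\bigl((\sum_j \mtx{A}_j^{\,2})^p\bigr) = \bigl\|(\sum_j \mtx{A}_j^{\,2})^{1/2}\bigr\|_{S_{2p}}^{2p}$, and summing over matchings yields the claim.

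The main obstacle is securing the sharp combinatorial constant $(2p)!/(2^p p!)$ rather than a cruder one: a weaker constant follows by brute force but would degrade the scalar Khintchine constant from the sharp $O(\sqrt{p})$ to $O(p)$, which is unacceptable for the Dudley-type chaos application underpinning Theorem~\ref{theorem:main}(b). Because this inequality is classical and well-documented, in practice I would simply invoke the statement from \cite{bu01} or \cite{ra10} rather than reproduce the full combinatorial argument here.
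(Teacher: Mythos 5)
The paper offers no proof of this theorem at all: it is quoted as a classical result with references to \cite{bu01} and \cite{ra10}, which is precisely what you conclude you would do. Your accompanying sketch (self-adjoint dilation to reduce to the Hermitian case, trace-moment expansion, the $(2p)!/(2^p\,p!)$ pairing count, and Schatten--H\"older applied to the summed trace terms for each pairing) is the standard argument carried out in those references, so the proposal is consistent with the paper's treatment.
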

Let $p \in \N$. We apply symmetrization with $\mtx{B}_j=\mtx{B}(\vct{Z}_j,\vct{Z}_j')$, estimate the
operator norm by the Schatten-$2p$-norm and apply the noncommutative
Khintchine inequality (after using Fubini's theorem), to obtain
\begin{align}
& \Big( \E \|\mtx{B}-\mtx{B}(\vct{x})\|_{2 \to 2}^{2p}\Big)^{\frac{1}{2p}}
= \Big( \E \|\frac{1}{m} \sum_{j=1}^m (\mtx{B}(\vct{Z}_j,\vct{Z}_j') - \E \mtx{B}(\vct{Z}_j,\vct{Z}_j') )\|_{2 \to 2}^{2p}\Big)^{\frac{1}{2p}}
\notag\\
& \leq \frac{2}{m} \Big( \E \| \sum_{j=1}^m \epsilon_j \mtx{B}(\vct{Z}_j,\vct{Z}_j') \|_{2 \to 2}^{2p} \Big)^{\frac{1}{2p}}
\leq \frac{2}{m} \Big( \E \| \sum_{j=1}^m \epsilon_j \mtx{B}(\vct{Z}_j,\vct{Z}_j') \|_{S_{2p}}^{2p} \Big)^{\frac{1}{2p}}\notag\\
& \leq \frac{2}{m} \Big(\frac{(2p)!}{2^p p!}\Big)^{\frac{1}{2p}}
\Big( \E \max\Big\{ \Big\|\Big( \sum_{j=1}^m \mtx{B}(Z_j,Z_j')^* \mtx{B}(\vct{Z}_j,\vct{Z}_j') \Big)^{1/2} \Big\|_{S_{2p}}^{2p},\notag\\
& \phantom{ \frac{2}{m} \Big(\frac{(2p)!}{2^p p!}\Big)^{\frac{1}{2p}} \Big( \E \max}
\Big\|\Big( \sum_{j=1}^m \mtx{B}(\vct{Z}_j,\vct{Z}_j') \mtx{B}(\vct{Z}_j,\vct{Z}_j')^* \Big)^{1/2} \Big\|_{S_{2p}}^{2p}\Big\} \Big)^{\frac{1}{2p}}. \label{maxxx}
\end{align}
Now recall that the $\vct{Z}_j,\vct{Z}_j'$ may take the values $\|\vct{x}\|_\ast p_\lambda
\vct{e}_\lambda$, with\linebreak $p_\lambda\in \{ 1,-1,i, -i \}$. Further, observe that
$\mtx{B}(\vct{e}_{\lambda'},\vct{e}_\lambda)^* = \mtx{B}(\vct{e}_\lambda,\vct{e}_{\lambda'})$, and, for $q
\neq q'$,
\begin{align}
& (\mtx{B}(\vct{e}_{\lambda'},\vct{e}_\lambda)^*\mtx{B}(\vct{e}_{\lambda'},\vct{e}_\lambda))_{q,q''}
    = \sum_{q'} \vct{e}_{\lambda}^\ast \mtx{A}_{q}^\ast \mtx{A}_{q'} e_{\lambda'} \, \vct{e}_{\lambda'}^\ast \mtx{A}_{q'}^\ast \mtx{A}_{q''} \vct{e}_{\lambda}\notag\\
   &= \sum_{q'} \vct{e}_{\lambda}^\ast \mtx{A}_{q}^\ast \mtx{A}_{q'} \mtx{P}_{\lambda'} \mtx{A}_{q'}^\ast \mtx{A}_{q''} \vct{e}_{\lambda}
= \vct{e}_{\lambda}^\ast \mtx{A}_{q}^\ast  \big(\sum_{q'} \mtx{A}_{q'} \mtx{P}_{\lambda'} \mtx{A}_{q'}^\ast \big)\ \mtx{A}_{q''} \vct{e}_{\lambda}\notag\\
&= \vct{e}_{\lambda}^\ast \mtx{A}_{q}^\ast \mtx{A}_{q''} \vct{e}_{\lambda}
= \langle \mtx{\pi}(\lambda)\vct{e}_{q''},\mtx{\pi}(\lambda) \vct{e}_{q} \rangle =  \langle \vct{e}_{q''},\vct{e}_{q} \rangle =\delta(q''-q).\notag
\end{align}
Therefore, $\mtx{B}(\vct{e}_{\lambda'},\vct{e}_\lambda)^*\mtx{B}(\vct{e}_{\lambda'},\vct{e}_\lambda) = \mtx{I}$
and
\begin{align}\label{BstarB}
\mtx{B}(\vct{Z}_\ell,\vct{Z}_\ell')^*\mtx{B}(\vct{Z}_j,\vct{Z}_j') = \|\vct{x}\|_\ast^4 \mtx{I}.
\end{align}
Since $\|\mtx{I} \|_{S_{2p}}^{2p} =n $, $ \|\vct{x}\|_\ast \leq 2 s \|\vct{x}\|_2=2s$, we obtain
\begin{align}
&\|\Big( \sum_{j=1}^m \mtx{B}(\vct{Z}_j,\vct{Z}_j')^* \mtx{B}(\vct{Z}_j,\vct{Z}_j') \Big)^{1/2}\|^{2p}_{S_{2p}}
= \| \Big( \sum_{j=1}^m  \|\vct{x}\|_\ast ^4  \mtx{I}  \Big)^{1/2}  \|^{2p}_{S_{2p}}
= \|\vct{x}\|_\ast^{4p} m^p n\notag\\
&\leq (2s)^{2p} m^p n\,. \label{empirical:estimate}
\end{align}
By symmetry this inequality applies also to the second term in the maximum
in \eqref{maxxx}. This yields
\begin{align}
\Big( \E \|\mtx{B}-\mtx{B}(\vct{x})\|_{2 \to 2}^{2p}\Big)^{\frac{1}{2p}}
        \leq  \frac{2}{m} \Big(\frac{(2p)!}{2^q q!}\Big)^{\frac{1}{2p}} \ 2s m^{\frac 1 2} n^{\frac 1 {2p}}
\leq  \frac{4s}{\sqrt{m}}n^{1/(2p)} \Big(\frac{(2p)!}{2^p p!}\Big)^{\frac{1}{2p}}.\notag
\end{align}
Using H\"older's inequality, we can interpolate between $2p$ and $2p+2$,
and an application of Stirling's formula yields for arbitrary moments $p \geq
2$, see also \cite{ra10},
\begin{align}
\Big( \E \|\mtx{B}-\mtx{B}(\vct{x})\|_{2 \to 2}^{p}\Big)^{1/p}
\leq 2^{3/(4p)} n^{1/p} e^{-1/2} \sqrt{p} \frac{4s}{\sqrt{m}}.
\end{align}
Now we use the following lemma relating moments and tails
\cite{ra09,ra10}.
\begin{proposition}\label{prop:moments} Suppose $\Xi$ is a
random variable satisfying
\[
(\E |\Xi|^p)^{1/p} \leq \alpha \beta^{1/p} p^{1/\gamma}\quad \mbox{ for all } p \geq p_0
\]
for some constants $\alpha,\beta,\gamma, p_0 > 0$. Then
\[
\P(|\Xi| \geq e^{1/\gamma} \alpha v) \leq \beta e^{- v^\gamma / \gamma}
\]
for all $v \geq p_0^{1/\gamma}$. 
\end{proposition}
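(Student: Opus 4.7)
The plan is to proceed by the standard Markov/moment method. For any $t>0$ and any $p\geq p_0$, Markov's inequality gives
\[
\P(|\Xi|\geq t) \,\leq\, t^{-p}\, \E|\Xi|^p \,\leq\, t^{-p}\bigl(\alpha\beta^{1/p}p^{1/\gamma}\bigr)^{p} \,=\, \beta\Bigl(\frac{\alpha p^{1/\gamma}}{t}\Bigr)^{p},
\]
where the second inequality is just the hypothesis raised to the $p$-th power.

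Next, I would optimize over $p$ by choosing it so that the base of the exponential equals $e^{-1/\gamma}$. Setting $t = e^{1/\gamma}\alpha v$ and $p = v^{\gamma}$ makes
\[
\frac{\alpha p^{1/\gamma}}{t} \,=\, \frac{\alpha v}{e^{1/\gamma}\alpha v} \,=\, e^{-1/\gamma},
\]
so the bound becomes
\[
\P\bigl(|\Xi|\geq e^{1/\gamma}\alpha v\bigr) \,\leq\, \beta\, e^{-p/\gamma} \,=\, \beta\, e^{-v^{\gamma}/\gamma}.
\]
The constraint $p\geq p_0$ translates exactly into $v\geq p_0^{1/\gamma}$, which is the hypothesis on $v$.

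There is essentially no obstacle here; the only subtlety is that the optimal $p$ should really be chosen as a positive real (or an integer close to $v^{\gamma}$), but since the moment hypothesis is assumed for all real $p\geq p_0$ in the proposition, the direct choice $p=v^{\gamma}$ works without further adjustment. If one instead had the hypothesis only for integer $p$, one would take $p=\lfloor v^{\gamma}\rfloor$ and absorb the resulting constant factor, but as stated this cosmetic issue does not arise. This completes the sketch.
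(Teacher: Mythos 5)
Your proof is correct and is exactly the standard Markov/moment argument; the paper itself gives no proof of this proposition (it is quoted from \cite{ra09,ra10}, where the same choice $t=e^{1/\gamma}\alpha v$, $p=v^{\gamma}$ is used). Nothing further is needed.
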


Applying the lemma with $p_0=2$, $\gamma=2$, $\beta=2^{3/4}n$,
$\alpha=e^{-1/2}
 \frac{4s}{\sqrt{m}}$, and
$$v=u \frac{e^{-1/\gamma}}{\alpha} =u \frac{e^{-1/2} \sqrt{m}}{ e^{-1/2} 4s} = u\frac{\sqrt m }{ 4s}\geq \sqrt 2 $$
gives
\[
\P\Big( \|\mtx{B} - \mtx{B}(\vct{x}) \|_{2 \to 2} \geq u \Big) \leq 2^{3/4} n e^{-\frac{m u^2}{32s^2}}, \quad u \geq 4s \sqrt{2/m}.
\]
In particular, if
\begin{align}\label{m:ineq2}
m > \frac{32s^2}{u^2} \log(2^{3/4}n)
\end{align}
then there exists a matrix  of the form $ \frac{1}{m} \sum_{j=1}^m \mtx{B}(\vct{z}_j,\vct{z}_j')$
with $\vct{z}_j,\vct{z}_j'$ of the given form $\|\vct{x}\|_\ast p_\lambda \vct{e}_\lambda $ for some
$k$ such that
\[
\Big\| \frac{1}{m} \sum_{j=1}^m \mtx{B}(\vct{z}_j,\vct{z}_j') - \mtx{B}(\vct{x}) \Big\| \leq u.
\]
As before,  we still have to discretize the prefactor $\|\vct{x}\|_\ast$. Assume that
$\alpha$ is chosen such that $|\|\vct{x}\|_\ast^2 - \alpha^2| \leq u$. Then, similarly
as in \eqref{Balpha},
\begin{align}
&\Big\| \frac{1}{m} \sum_{j=1}^m \mtx{B}(\alpha \sgn(x_{\lambda_j}) \vct{e}_{\lambda_j}, \alpha \sgn(x_{\lambda_{j'}}) \vct{e}_{\lambda_{j'}}) \notag\\
& \;\;\;
- \frac{1}{m} \sum_{j=1}^m \mtx{B}(\|\vct{x}\|_1 \sgn(x_{\lambda_j}) \vct{e}_{\lambda_j}, \|\vct{x}\|_1 \sgn(x_{\lambda_{j'}}) \vct{e}_{\lambda_{j'}})\Big\|_{2 \to 2}\notag\\
&= |\|\vct{x}\|_1^2 - \alpha^2| \| \frac{1}{m} \sum_{j=1}^m \mtx{B}(\sgn(x_{\lambda_j}) \vct{e}_{\lambda_j},\sgn(x_{\lambda_{j'}})
\vct{e}_{\lambda_{j'}}) \|_{2 \to 2}\notag\\
& \leq \frac{u}{m} \sum_{j=1}^m \|\mtx{B}(\sgn(x_{\lambda_j}) \vct{e}_{\lambda_j},\sgn(x_{\lambda_{j'}})
\vct{e}_{\lambda_{j'}})\|_{2 \to 2}
= u.\notag
\end{align}
Hereby, we used $\|\mtx{B}(\sgn(x_{\lambda_j})
\vct{e}_{\lambda_j},\sgn(x_{\lambda_{j'}}) \vct{e}_{\lambda_{j'}})\|_{2 \to 2} = 1$.

As in Section~\ref{subsection:d1largeU}, we use a discretization of $J_s=
[1,2s]$ with about $K=\lceil \frac{2s}{u} \rceil$ elements,
$\alpha_1,\ldots,\alpha_K$ such that for any $\beta$ in $J_s$ there exists
$k$ such $| \beta-\alpha_k^2 | \leq u$. Now, provided \eqref{m:ineq2} holds,
for given $\vct{x}$ we can find
$\tilde{\vct{z}}_1,\ldots,\tilde{\vct{z}}_m,\tilde{\vct{z}}_1',\ldots,\tilde{\vct{z}}_m'$ of the form
$\alpha_k  \sgn(x_\lambda) \vct{e}_\lambda$, $p(\lambda)\in\{1,-1,i,-i\}$, with
\[
\|\mtx{B}(\vct{x}) - \frac{1}{m} \sum_{j=1}^m \mtx{B}(\tilde{\vct{z}}_j,\tilde{\vct{z}}_j') \|_{2 \to 2} \leq 2u.
\]
Observe as in Section~\ref{subsection:d1largeU} that each $\tilde{\vct{z}_j}$ can
take $4\lceil \frac{2s}{u} \rceil n^2$ values, so that $\frac{1}{m} \sum_{j=1}^m
B(\tilde{z}_j,\tilde{z}_j')$ can take at most $(4\lceil \frac{2s}{u} \rceil n^2)^{2m}
\leq (C n^2 s/u)^{2m}$ values. As seen before, this establishes a $4u$
covering of the set of matrices $\mtx{B}(\vct{x})$ with $\vct{x} \in T_\sparsity$ of cardinality
at most $(C n^2 s/u)^{2m}$, and we conclude
\begin{align}
\log(N(T_\sparsity,d_1,u)) &\leq \log( (Cn^2s/u)^{2m}) \leq C' \frac{s^2}{u^2} \log(2^{3/4}n) \log(Cn^2s/u)\notag\\
&\leq \tilde{C} \frac{s^2}{u^2} \log(2n)  \log(n^2/u).\notag
\end{align}
This completes the proof of Lemma \ref{lem:d1}. 

\section{Probability estimate}\label{section:probability}

To prove Theorem~\ref{theorem:main}(b) will use the following concentration inequality, which is a slight variant of
Theorem 17 in \cite{boluma03}, which in turn is an improved version of a
striking result due to Talagrand \cite{ta96-2}. Note that with $\mtx{B}(\vct{x})$ as
defined above, $Y$ below satisfies $\E Y=n \,\E \delta_s$.
\begin{theorem}
\label{th:boucheron} Let $\mathscr{B}=\{\mtx{B}(\vct{x})\}_{\vct{x}\in T}$ be a countable
collection of
$n\times n$ complex Hermitian matrices, and let
$\vct{\epsilon}=(\epsilon_1,\ldots,\epsilon_n)^T$ be a sequence of i.i.d. Rademacher or
Steinhaus random variables.
Assume that $B(\vct{x})_{q,q} = 0$ for all $\vct{x}\in T$. Let $Y$ be the random
variable
\[
		Y     = \sup_{\vct{x}\in T} \Big|\vct{\epsilon}^\ast \mtx{B}(\vct{x}) \vct{\epsilon} \Big|
                = \Big| \sum_{q,q'=1}^n \overline{\epsilon_{q'}} \epsilon_q B(\vct{x})_{q',q} \Big|.
\]
Define $U$ and $V$ to be
\[
	U = \sup_{\vct{x}\in T} \|\mtx{B}(\vct{x})\|_{2 \to 2}
\]
and
\begin{align}\label{def:V}
	V = \E\sup_{\vct{x}\in T}\|\mtx{B}(\vct{x})\vct{\epsilon}\|_2^2= \E\sup_{\vct{x}\in T}\sum_{q'=1}^n\Big|\sum_{q=1}^n \epsilon_q B(\vct{x})_{q',q}\Big|^2.
\end{align}
Then, for $\lambda \geq 0$,
	\begin{equation}
		\label{eq:boucheron}
		\P\Big(Y \geq \E[Y] + \lambda\Big)
		~\leq~
		\exp\Big(-\frac{\lambda^2}{32V + 65U\lambda/3} \Big).
	\end{equation}
\end{theorem}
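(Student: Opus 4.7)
The plan is to derive Theorem \ref{th:boucheron} by reducing it to the concentration inequality for second-order Rademacher chaoses stated as Theorem 17 in \cite{boluma03}, which in turn refines Talagrand's result \cite{ta96-2} by providing explicit numerical constants. First I would handle the Rademacher case directly: since each $\mtx{B}(\vct{x})$ is Hermitian with vanishing diagonal, the random variable $\vct{\epsilon}^\ast \mtx{B}(\vct{x})\vct{\epsilon} = \sum_{q\neq q'}\overline{\epsilon_{q'}}\epsilon_q B(\vct{x})_{q',q}$ is exactly a homogeneous Rademacher chaos of order $2$, so the cited BLM theorem applies to $Y = \sup_{\vct{x}\in T}|\vct{\epsilon}^\ast \mtx{B}(\vct{x})\vct{\epsilon}|$. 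The weak-variance proxy in that statement is the supremum of $\|\mtx{B}(\vct{x})\|_{2\to 2}$ and the strong-variance proxy is $\E\sup_{\vct{x}\in T}\|\mtx{B}(\vct{x})\vct{\epsilon}\|_2^2$, which agree with $U$ and $V$ as defined in the hypothesis. Tracking constants through the entropy-method/logarithmic Sobolev argument in \cite{boluma03} yields the stated prefactors $32$ and $65/3$.

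For the Steinhaus case I would mimic the argument from the proof of Theorem \ref{Dudley:chaos}. Writing $\vct{\epsilon}^\ast \mtx{B}(\vct{x})\vct{\epsilon}$ as a sum of four real bilinear forms in $\Re\vct{\epsilon}$ and $\Im\vct{\epsilon}$, one decouples each mixed bilinear form using the standard decoupling inequality \cite[Theorem 3.1.1]{gide99}, and then applies the contraction principle conditionally (as in the proof of Theorem \ref{Dudley:chaos}) to replace the bounded variables $\Re(\epsilon_\ell),\Im(\epsilon_\ell)\in[-1,1]$ by independent Rademacher sequences. The BLM concentration bound is also valid for decoupled bilinear Rademacher chaoses (indeed its proof uses decoupling internally), so the tail bound transfers to $Y$ with absolute constants. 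The crucial observation is that $\|\Re\mtx{B}(\vct{x})\|_{2\to 2}$ and $\|\Im\mtx{B}(\vct{x})\|_{2\to 2}$ are controlled by $\|\mtx{B}(\vct{x})\|_{2\to 2}$ (viewing complex Hermitian matrices as $2n\times 2n$ real block matrices), and similarly the norms $\E\sup_{\vct{x}}\|\Re\mtx{B}(\vct{x})\vct{\xi}\|_2^2$, $\E\sup_{\vct{x}}\|\Im\mtx{B}(\vct{x})\vct{\xi}\|_2^2$ are dominated by $V$, so the parameters $U$ and $V$ as given in the hypothesis serve as valid inputs to BLM applied to the Rademacher pieces.

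The main obstacle I anticipate is tracking the precise numerical constants through the decoupling and contraction steps in the Steinhaus case, since each reduction loses a factor, and the constants $32$, $65/3$ appearing in \eqref{eq:boucheron} must survive. Fortunately, Theorem \ref{theorem:main}(b) only requires an absolute constant in the exponent, so if the constants degrade by a bounded factor in the Steinhaus reduction they can be absorbed without consequence; alternatively, one could state the Rademacher version with the sharp constants $32$, $65/3$ and note that the Steinhaus version holds with possibly enlarged (but still absolute) constants. A secondary technical point is that the BLM theorem is stated for real-valued chaoses, so before applying it I would split $Y$ into real and imaginary parts (both of which are bounded by $Y$ and satisfy the same structural assumptions), apply the theorem to each, and combine via a union bound.
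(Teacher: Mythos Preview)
Your Rademacher case matches the paper exactly: both simply invoke Theorem~17 of \cite{boluma03}.

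For the Steinhaus case, however, the paper does \emph{not} reduce to the Rademacher result via decoupling and contraction as you propose. Instead it reopens the entropy-method proof of \cite{boluma03} and carries it out directly for Steinhaus variables. Writing $Y=f(\vct{\epsilon})$ and $Y^{(\ell)}=f(\vct{\epsilon}^{(\ell)})$, where $\vct{\epsilon}^{(\ell)}$ has its $\ell$-th coordinate replaced by an independent Steinhaus copy $\widetilde{\epsilon_\ell}$, the paper bounds the conditional increment
\[
\E\big[(Y-Y^{(\ell)})^2\,\mathbf{1}_{Y>Y^{(\ell)}}\,\big|\,\vct{\epsilon}\big]
\;\leq\; 4\,\E_{\widetilde{\epsilon_\ell}}|\epsilon_\ell-\widetilde{\epsilon_\ell}|^2
\Big|\sum_{j}\epsilon_j \widehat{M}_{j,\ell}\Big|^2
\;=\; 8\Big|\sum_{j}\epsilon_j \widehat{M}_{j,\ell}\Big|^2,
\]
using only $|\epsilon_\ell|=1$, the Hermitian symmetry, and the zero diagonal. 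This is precisely the quantity that feeds the modified log-Sobolev inequality in \cite{boluma03}, so the remainder of that argument goes through verbatim and the constants $32$ and $65/3$ survive unchanged.

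Your proposed route has a genuine gap that goes beyond constant-tracking. Decoupling and the contraction principle are comparison tools for \emph{expectations} (or moments) of norms; they do not transfer an inequality of the form $\P(Y\geq \E Y+\lambda)\leq\ldots$ from one chaos to another. When you replace $\Re\vct{\epsilon},\Im\vct{\epsilon}$ by Rademacher sequences you change both $Y$ and $\E Y$, and nothing in your outline controls the Steinhaus deviation $Y-\E Y$ by the Rademacher one. A union bound over four pieces $Y_1,\ldots,Y_4$ would require relating $\E Y$ to the individual $\E Y_i$, which fails for suprema. (Incidentally, $\vct{\epsilon}^*\mtx{B}(\vct{x})\vct{\epsilon}$ is automatically real since $\mtx{B}(\vct{x})$ is Hermitian, so your final real/imaginary splitting of $Y$ is unnecessary.) The paper's direct approach sidesteps all of this: the entropy method only needs bounded i.i.d.\ coordinates, and the Steinhaus computation of the influence term is no harder than the Rademacher one.
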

\begin{proof} For Rademacher variables, the statement is exactly Theorem 17 in \cite{boluma03}.
For Steinhaus sequences, we provide a variation of its proof.
For $\vct{\epsilon} = (\epsilon_1,\ldots,\epsilon_n)$,
let $g_{\mtx{M}}(\vct{\epsilon}) = \sum_{j,k=1}^n \overline{\epsilon_j} \epsilon_k M_{j,k}$
and set
\[
Y = f(\vct{\epsilon}) = \sup_{\mtx{M}\in\mathscr{B}} \Big| g_{\mtx{M}}(\vct{\epsilon}) \Big|.
\]
Further, for an independent copy $\widetilde{\epsilon}_\ell$ of $\epsilon_\ell$, set
$\vct{\epsilon^{(\ell)}} = (\epsilon_1,\ldots,\epsilon_\ell,\widetilde{\epsilon_\ell},\epsilon_{\ell+1},\ldots,\epsilon_n)$ and $Y^{(\ell)} = f(\vct{\epsilon^{(\ell)}})$.
Conditional on $(\epsilon_1,\ldots,\epsilon_n)$,
let $\widehat{\mtx{M}} = \widehat{\mtx{M}}(\vct{\epsilon})$ be the matrix giving the maximum in the definition of $Y$.
(If the supremum is not attained, then one has to consider finite subsets $T \subset \mathscr{B}$. The derived estimate
will not depend on $T$, so that one can afterwards pass over to the possibly infinite, but countable, set $\mathscr{B}$.)
Then we obtain, using $\widehat{\mtx{M}}^* = \widehat{\mtx{M}}$ and $\widehat{M}_{kk} = 0$
in the last step,
\begin{align}
&\E \Big[ (Y-Y^{(\ell)})^2 \mathbf{1}_{Z>Z^{(\ell)}} | \epsilon\Big]  \leq
\E \Big[ |g_{\widehat{M}}(\vct{\epsilon}) - g_{\widehat{M}}(\vct{\epsilon^{(\ell)}})|^2 \mathbf{1}_{Z>Z^{(\ell)}} | \epsilon\Big]
\notag\\
& = \E \Big[ |(\overline{\epsilon_\ell - \widetilde{\epsilon_\ell}}) \sum_{j=1, j\neq \ell}^n \epsilon_j \widehat{M}_{j,\ell} + (\epsilon_\ell - \widetilde{\epsilon_\ell}) \sum_{k=1,k\neq \ell}^n \overline{\epsilon_k} \widehat{M}_{\ell,k}|^2 \mathbf{1}_{Z>Z^{(\ell)}} | \epsilon\Big]\notag\\
&\leq 4 \E_{\widetilde{\epsilon_\ell}} |\epsilon_\ell - \widetilde{\epsilon_\ell}|^2
\Big|\sum_{j=1,j\neq \ell}^n \epsilon_j \widehat{M}_{j,\ell}\Big|^2
= 8  \Big|\sum_{j=1}^n\epsilon_j \widehat{M}_{j,\ell}\Big|^2.\notag
\end{align}
The remainder of the proof is analogous to the one in \cite{boluma03} and therefore omitted. 
\end{proof}

We first note that we may pass from $T_s$ to a dense countable subset
$T_s^{\circ}$ without changing the supremum, hence Theorem~\ref{th:boucheron} is applicable. 
Now, it remains to estimate $U$ and
$V$.  To this end, note that \eqref{d2:smallu} implies
$$
    U= \sup_{\vct{x}\in T_\sparsity} \|\mtx{B}(\vct{x})\|_{2 \to 2} \leq \sup_{\vct{x}\in T_\sparsity} 2s \|\vct{x}\|_2=2s\, .
$$

The remainder of this section develops an estimate of the quantity $V$ in
\eqref{def:V}. Hereby, we rely on a Dudley type inequality for Rademacher or Steinhaus
processes with values in $\ell_2$, see below. First we note the following
Hoeffding type inequality.

\begin{proposition}\label{prop:Hoeffding:vec} Let $\vct{\epsilon}=(\epsilon_q)_{q=1}^n$ be a
Steinhaus sequence and let $\mtx{B} \in \C^{m\times n}$. Then, for $u\geq 0$,
\begin{align}\label{PSteinhaus}
\P\Big(\| \mtx{B}\vct{\epsilon}\|_2 \geq u \|\mtx{B}\|_F \Big) \leq 8 e^{-u^2/16}.
\end{align}
\end{proposition}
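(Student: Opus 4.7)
The plan is to derive the stated tail bound from a Khintchine--Kahane moment estimate and then convert it to a tail inequality via Proposition~\ref{prop:moments}. First I would compute the variance: expanding $\|\mtx{B}\vct{\epsilon}\|_2^2 = \vct{\epsilon}^* \mtx{B}^* \mtx{B}\vct{\epsilon}$ and using the orthogonality $\E[\overline{\epsilon_q}\epsilon_{q'}] = \delta_{q,q'}$ of Steinhaus variables yields $\E\|\mtx{B}\vct{\epsilon}\|_2^2 = \Tr(\mtx{B}^*\mtx{B}) = \|\mtx{B}\|_F^2$, and hence $\E\|\mtx{B}\vct{\epsilon}\|_2 \leq \|\mtx{B}\|_F$ by Jensen's inequality.

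Next I would bound the $2p$-th moments by a vector-valued Khintchine estimate. Writing
\[
\|\mtx{B}\vct{\epsilon}\|_2^{2p} = \Bigl(\sum_j \bigl|\textstyle\sum_q B_{j,q}\epsilon_q\bigr|^2\Bigr)^p
\]
and applying Minkowski's inequality in the form $L^p(\Omega;\ell^2_j) \hookrightarrow \ell^2_j(L^p(\Omega))$ gives
\[
\bigl(\E\|\mtx{B}\vct{\epsilon}\|_2^{2p}\bigr)^{1/p} \leq \sum_j \Bigl(\E \bigl|\textstyle\sum_q B_{j,q}\epsilon_q\bigr|^{2p}\Bigr)^{1/p}.
\]
The scalar Khintchine inequality for Steinhaus sequences, $(\E|\sum_q a_q\epsilon_q|^{2p})^{1/(2p)} \leq C\sqrt{p}(\sum_q|a_q|^2)^{1/2}$, then produces $(\E\|\mtx{B}\vct{\epsilon}\|_2^{2p})^{1/(2p)} \leq C\sqrt{p}\,\|\mtx{B}\|_F$, and this extends to arbitrary real $p \geq 2$ by log-convexity of moments.

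Finally, I would plug this bound into Proposition~\ref{prop:moments} with $\gamma = 2$, $\alpha = C\|\mtx{B}\|_F$, and an absolute $\beta$, which converts the moment estimate into a sub-Gaussian tail of the form $\P(\|\mtx{B}\vct{\epsilon}\|_2 \geq c_1 \|\mtx{B}\|_F v) \leq \beta e^{-v^2/2}$ for $v$ above some threshold. A straightforward change of variables casts the conclusion as $\P(\|\mtx{B}\vct{\epsilon}\|_2 \geq u\|\mtx{B}\|_F) \leq \beta' e^{-c u^2}$; the specific constants $\beta' = 8$ and $c = 1/16$ are a convenient choice valid once $u$ is large enough, while for small $u$ the right-hand side of the advertised inequality already exceeds one and the bound is trivial. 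The main obstacle is the bookkeeping needed to pin down these explicit numerical constants, and it can be bypassed entirely by appealing instead to McDiarmid's bounded differences inequality applied to $f(\vct{\epsilon}) = \|\mtx{B}\vct{\epsilon}\|_2$: altering $\epsilon_q$ changes $f$ by at most $2\|\mtx{B}\vct{e}_q\|_2$, so that $\sum_q c_q^2 = 4\|\mtx{B}\|_F^2$, and this combined with the mean bound from the first step yields a concentration inequality of the same form.
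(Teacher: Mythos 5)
Your argument is correct, but it is a genuinely different route from the paper's. The paper does not prove the Steinhaus bound from scratch: it quotes the Rademacher version $\P(\|\mtx{B}\vct{\xi}\|_2 \geq u\|\mtx{B}\|_F)\leq 2e^{-u^2/2}$ from \cite[Proposition B.1]{rarotr10} and transfers it to Steinhaus variables by splitting $\mtx{B}\vct{\epsilon}$ into the four combinations of $\Re$ and $\Im$ of $\mtx{B}$ and of $\vct{\epsilon}$, invoking the contraction principle \cite[Theorem 4.4]{leta91} on each piece, and taking a union bound --- which is exactly where the factor $8=4\cdot 2$ and the $16$ in the exponent come from. Your primary route (Minkowski's integral inequality to reduce to the scalar Steinhaus--Khintchine inequality, then Proposition~\ref{prop:moments} with $\gamma=2$) is self-contained and sound: the identity $\E\|\mtx{B}\vct{\epsilon}\|_2^2=\|\mtx{B}\|_F^2$, the bound $\bigl(\E\|\mtx{B}\vct{\epsilon}\|_2^{2p}\bigr)^{1/(2p)}\leq C\sqrt{p}\,\|\mtx{B}\|_F$, and the interpolation to non-integer moments are all standard, and since the Steinhaus--Khintchine constant satisfies $\E|\sum_q a_q\epsilon_q|^{2p}\leq p!\,\|a\|_2^{2p}$ one can take $C=1$, after which the constants $8$ and $1/16$ are comfortably met (for small $u$ the bound is trivial, as you note). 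Your fallback via the bounded-differences inequality is in some ways the cleanest of the three: the increments are $c_q=2\|\mtx{B}\vct{e}_q\|_2$, so $\sum_q c_q^2=4\|\mtx{B}\|_F^2$ and one gets $\P(\|\mtx{B}\vct{\epsilon}\|_2\geq \E\|\mtx{B}\vct{\epsilon}\|_2+t)\leq e^{-t^2/(2\|\mtx{B}\|_F^2)}$; combined with $\E\|\mtx{B}\vct{\epsilon}\|_2\leq\|\mtx{B}\|_F$ this gives $e^{-(u-1)^2/2}$ for $u\geq 1$, which one checks dominates $8e^{-u^2/16}$. What your approaches buy is independence from the external Rademacher result and from decoupling/contraction machinery (and, in the McDiarmid version, sharper constants); what the paper's approach buys is brevity and uniformity of method, since the same contraction-principle reduction is reused elsewhere (e.g.\ in the proof of Theorem~\ref{Dudley:chaos}).
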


\begin{proof}
In \cite[Proposition B.1]{rarotr10}, it is shown that
 \begin{equation}\label{Rademacher:Hoeffding}
\P\Big(\| \mtx{B}\vct{\epsilon} \|_2 \geq u \|\mtx{B}\|_F \Big) \leq 2 e^{-u^2/2}.
\end{equation}
for Rademacher sequences.  We extend this result using the contraction principle \cite[Theorem 4.4]{leta91}, as in the proof of Theorem \ref{Dudley:chaos}.

In fact, \cite[Theorem 4.4]{leta91} implies that for $\mtx{B}\in\C^{n\times n}$ and $\vct{\epsilon}$ being a Steinhaus sequence and
$\vct{\xi}$ a Rademacher sequence, we have, for example
\[
  \P(\|\Re (\mtx{B}) \Re (\vct{\epsilon}) \|_2 \geq u\|\mtx{B}\|_F)\leq 2 \P(\|\Re \mtx{B} \vct{\xi} \|_2 \geq u\|\mtx{B}\|_F)\leq 4 e^{-u^2/2}.
\]
Hence,
\begin{align*}
&  \P(\| \mtx{B}  \vct{\epsilon} \|_2 \geq u\|\mtx{B}\|_F) = \P(\| \Re(\mtx{B}  \vct{\epsilon}) \|_2^2
        +\| \Im(\mtx{B}  \vct{\epsilon}) \|_2^2 \geq u^2\|\mtx{B}\|_F^2)\\
  &\leq \P(\| \Re(\mtx{B}  \vct{\epsilon}) \|_2^2 \geq \frac{u^2}{\sqrt 2})
        +\P(\| \Im(\mtx{B}  \vct{\epsilon}) \|_2^2 \geq \frac{u}{\sqrt 2} \|\mtx{B}\|_F^2)\\
  &\leq \P(\| \Re \mtx{B} \Re \vct{\epsilon}) \|_2 \geq \frac{u}{\sqrt 8}\|\mtx{B}\|_F^2) +
        \P(\| \Im \mtx{B} \Im \vct{\epsilon}) \|_2 \geq \frac{u}{\sqrt 8}\|\mtx{B}\|_F^2) \\
&        +\P(\| \Re \mtx{B} \Im \vct{\epsilon}) \|_2 \geq \frac{u}{\sqrt 8}\|\mtx{B}\|_F^2) +
        \P(\| \Im \mtx{B} \Re \vct{\epsilon}) \|_2 \geq \frac{u}{\sqrt 8}\|\mtx{B}\|_F^2)\\
       & \leq 8 e^{-u^2/16}.
\end{align*}
\end{proof}
With more effort, one may also derive \eqref{PSteinhaus} with better constants.
Let us now estimate the quantity
\[
V   = \E \sup_{\vct{x} \in T_\sparsity} \|\mtx{B}(\vct{x})\vct{\epsilon}\|_2^2
        =\E \sup_{\vct{x} \in T_\sparsity} \sum_{q'=1} |\sum_{q=1} \epsilon_{q} B(\vct{x})_{q',q}|^2.
\]
It follows immediately from Proposition \ref{prop:Hoeffding:vec} and \eqref{Rademacher:Hoeffding} that the
increments of the process satisfy
\begin{align}\label{YM:increments}
\P(\|\mtx{B}(\vct{x})\vct{\epsilon} - \mtx{B}(\vct{x}')\vct{\epsilon}\|_2 \geq u \|\mtx{B}(\vct{x}) - \mtx{B}(\vct{x}')\|_F) \leq 8 e^{-u^2/16}.
\end{align}
This allows to apply the following variant of Dudley's inequality for vector-valued processes
in $\ell_2$.
\begin{theorem}\label{Dudley:l2} Let $\vct{R}_x$, $\vct{x} \in T$, be a process with values in $\C^\m$ indexed by a metric space
$(T,d)$, with increments that satisfy the subgaussian tail estimate
\[
\P(\|\vct{R}_{\vct{x}} - \vct{R}_{\vct{x'}}\|_2 \geq u d(\vct{x},\vct{x'}) ) \leq 8 e^{-u^2/16}.
\]
Then, for an arbitrary $\vct{x_0} \in T$ and a universal constant $K>0$,
\begin{align}
 \Big(\E \sup_{\vct{x} \in T} \| \vct{R}_{\vct{x}} - \vct{R}_{\vct{x_0}}\|_2^2\Big)^{1/2} \leq K
\int_0^\infty \sqrt{\log(N(T,d,u))} du, \label{eqn:increments}
\end{align}
where $N(T,d,u)$ denote the covering numbers of $T$ with respect to $d$ and radius $u>0$.
\end{theorem}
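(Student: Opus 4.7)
The plan is to establish Theorem~\ref{Dudley:l2} by a standard chaining argument, adapted from the scalar Dudley inequality, with the key new ingredient that Minkowski's inequality in $L^2$ is used to sum the increments. First, without loss of generality we may assume $T$ is finite (and in general obtain the result by monotone convergence on a countable dense subset, as done for $T_s^{\circ}$ earlier). Let $n_0$ be the largest integer with $2^{-n_0}$ exceeding the diameter of $T$; for each $n \geq n_0$, fix an admissible net $T_n \subset T$ of cardinality $|T_n| \leq N(T,d,2^{-n})$ that is a $2^{-n}$-covering, and let $\pi_n: T \to T_n$ be the nearest-point map. Pick $T_{n_0} = \{\vct{x_0}\}$.

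The second step is the chaining telescope: for every $\vct{x} \in T$,
\begin{equation*}
\vct{R_x} - \vct{R_{x_0}} = \sum_{n > n_0} \bigl(\vct{R}_{\pi_n(\vct{x})} - \vct{R}_{\pi_{n-1}(\vct{x})}\bigr),
\end{equation*}
so the pointwise triangle inequality in $\ell_2$ gives $\|\vct{R_x} - \vct{R_{x_0}}\|_2 \leq \sum_{n>n_0} \sup_{\vct{x} \in T} \|\vct{R}_{\pi_n(\vct{x})} - \vct{R}_{\pi_{n-1}(\vct{x})}\|_2$. Applying Minkowski's inequality in $L^2(\Omega)$ then yields
\begin{equation*}
\Bigl(\E \sup_{\vct{x} \in T} \|\vct{R_x} - \vct{R_{x_0}}\|_2^2 \Bigr)^{1/2}
\leq \sum_{n>n_0} \Bigl(\E \sup_{\vct{x} \in T} \|\vct{R}_{\pi_n(\vct{x})} - \vct{R}_{\pi_{n-1}(\vct{x})}\|_2^2 \Bigr)^{1/2}.
\end{equation*}

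The third step estimates each summand. As $\vct{x}$ ranges over $T$, the pair $(\pi_n(\vct{x}), \pi_{n-1}(\vct{x}))$ takes at most $|T_n| \cdot |T_{n-1}| \leq N(T,d,2^{-n})^2$ distinct values, and the triangle inequality for $d$ gives $d(\pi_n(\vct{x}), \pi_{n-1}(\vct{x})) \leq 3 \cdot 2^{-n}$. By the assumed subgaussian tail bound, each scalar random variable $\|\vct{R_a} - \vct{R_b}\|_2$ with $d(\vct{a},\vct{b}) \leq 3 \cdot 2^{-n}$ satisfies $\P(\|\vct{R_a} - \vct{R_b}\|_2 \geq t) \leq 8 \exp(-t^2/(16 \cdot 9 \cdot 4^{-n}))$. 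A direct integration of the tail of the maximum of $M$ such subgaussian scalars yields
\begin{equation*}
\Bigl(\E \max_{1 \leq j \leq M} \|\vct{R}_{\vct{a}_j} - \vct{R}_{\vct{b}_j}\|_2^2\Bigr)^{1/2} \leq C \cdot 2^{-n} \sqrt{\log(M) + 1},
\end{equation*}
so with $M = N(T,d,2^{-n})^2$ we obtain $(\E \sup_{\vct{x}} \|\vct{R}_{\pi_n(\vct{x})} - \vct{R}_{\pi_{n-1}(\vct{x})}\|_2^2)^{1/2} \leq C' \cdot 2^{-n} \sqrt{\log N(T,d,2^{-n})}$.

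The final step is the standard comparison between the dyadic sum $\sum_{n>n_0} 2^{-n} \sqrt{\log N(T,d,2^{-n})}$ and the Dudley integral $\int_0^{\infty} \sqrt{\log N(T,d,u)}\, du$, which differ only by a universal multiplicative constant because $\sqrt{\log N(T,d,u)}$ is monotone nonincreasing in $u$. This completes the proof. The main technical obstacle is the replacement of the pointwise triangle inequality by Minkowski in $L^2$, which is what allows one to pass from a bound on the max of subgaussian scalars (giving a $\sqrt{\log N}$ factor) to the correct $L^2$ bound on the supremum of the vector-valued process without losing logarithmic factors; everything else is the classical Dudley chaining machinery.
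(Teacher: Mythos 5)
Your proposal is correct and follows essentially the same route as the paper, which simply invokes the standard chaining proof of Dudley's inequality with the scalar triangle inequality replaced by the one for $\|\cdot\|_2$ in $\C^m$. Your explicit use of Minkowski's inequality in $L^2(\Omega)$ to pass from the pointwise telescope bound to the $(\E\sup(\cdot)^2)^{1/2}$ norm, together with the standard tail integration for the maximum over each net and the dyadic-sum-to-integral comparison, is exactly the intended adaptation.
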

\begin{proof} The proof follows literally
the lines of the standard proof of Dudley's inequalities for
scalar-valued subgaussian processes, see for instance
\cite[Theorem 6.23]{ra10} or \cite{azws09,leta91,ta05-2}. One only has to replace
the triangle inequality for the absolute value by the one for $\|\cdot\|_2$ in $\C^\m$. 
\end{proof}

We have $d=d_2$ defined above, and, hence, \eqref{entropy:int1a} provides
us with the right hand side of \eqref{eqn:increments}. Using the fact that
here, $\vct{R}_x=\mtx{B}(\vct{x})\vct{\epsilon}$,  we conclude that
\begin{align}
V&=    \E \sup_{\vct{x} \in T_s} \| \mtx{B}(\vct{x})\vct{\epsilon} \|_2^2=\E \sup_{\vct{x} \in T_s} \| \mtx{B}(\vct{x})\vct{\epsilon}-\mtx{B}(\vct{0})\vct{\epsilon} \|_2^2\notag\\
             &   \leq \big(KC \sqrt{n s^{3/2}} \sqrt{\log(n)} \log(\sparsity)\big)^2 
             \leq C'  n s^{3/2}
 \log(n) \log^2(\sparsity).\notag
\end{align}

Plugging these estimates into \eqref{eq:boucheron} and simplifying
leads to our result, compare with \cite{rarotr10}.
In particular, Theorem \ref{theorem:main}(b) follows.

\section*{acknoeldegements}
G\"otz E. Pfander appreciates the support by the {\it Deutsche Forschungsgemeinschaft} (DFG) under grant 50292 DFG PF-4 Sampling Operators. Holger Rauhut acknowledges generous support by
the Hausdorff Center for Mathematics, and funding by the Starting Independent Researcher Grant
StG-2010 258926-SPALORA from the European Research Council (ERC). Joel A. Tropp was supported in part by the Defense Advanced Research Projects Agency (DARPA) and the Office of Naval Research (ONR) under Grants N66001-06-1-2011 and N66001-08-1-2065.


\end{document}